\PassOptionsToPackage{dvipsnames}{xcolor}
\documentclass[copyright,creativecommons,submission]{eptcs}

\bibliographystyle{eptcs}

\usepackage{booktabs}   
\usepackage{subcaption} 
\usepackage[utf8]{inputenc}
\usepackage[english]{babel}
\usepackage[T1]{fontenc}
\usepackage{amsmath}
\usepackage{hyperref}
\usepackage{bussproofs}
\usepackage{tikz}
\usepackage{amsthm}
\usepackage{amssymb}
\usepackage{mathtools}
\usepackage{stmaryrd}



\newenvironment{bprooftree}
  {\leavevmode\hbox\bgroup}
  {\DisplayProof\egroup}

\DeclareRobustCommand{\amalg}{\mathop{\text{\fakecoprod}}}
\newcommand{\fakecoprod}{%
  \sbox0{$\prod$}%
  \smash{\raisebox{\dimexpr.9625\depth-\dp0}{\scalebox{1}[-1]{$\prod$}}}%
  \vphantom{$\prod$}%
}


\newtheorem{theorem}{Theorem}
\newtheorem{lemma}[theorem]{Lemma}
\newtheorem{proposition}[theorem]{Proposition}

\newtheorem{definition}[theorem]{Definition}
\newtheorem{assumption}[theorem]{Assumption}
\newtheorem{example}[theorem]{Example}

\newtheorem{remark}[theorem]{Remark}


%

\newcommand{\secref}[1]{\S\ref{#1}}

%

\newcommand{\Ob}{\text{Ob}}%

\newcommand{\lnl}{\ensuremath{\mathbf{\lambda}_l}}
\newcommand{\anl}{\ensuremath{\mathbf{\lambda}_a}}

\newcommand{\id}{\text{id}}
\newcommand{\Id}{\text{Id}}

\newcommand{\op}{\ensuremath{\mathrm{op}}}

\newcommand{\VV}{\ensuremath{\mathbf{V}}}

\newcommand{\BB}{\ensuremath{\mathbf{B}}}
\newcommand{\BBB}{\ensuremath{\mathbf{B}}}

\newcommand{\CC}{\ensuremath{\mathbf{C}}}
\newcommand{\CCC}{\ensuremath{\mathbf{C}}}
\newcommand{\VVV}{\ensuremath{\mathbf{V}}}

\newcommand{\DCPO}{\ensuremath{\mathbf{CPO}}}

\newcommand{\dcpo}{\DCPO}
\newcommand{\dcpobs}{\ensuremath{\mathbf{CPO}_{\perp!}}}

\newcommand{\cpobs}{\dcpobs}
\newcommand{\cpo}{\dcpo}

\newcommand{\M}{\ensuremath{\mathbf{M}}}

\newcommand{\Set}{\ensuremath{\mathbf{Set}}}



\newcommand{\lift}{\textnormal{\texttt{lift}}}
\newcommand{\lleft}{\textnormal{\texttt{left}}}
\newcommand{\rright}{\textnormal{\texttt{right}}}
\newcommand{\force}{\textnormal{\texttt{force}}}
\newcommand{\ccase}{\textnormal{\texttt{case}}}
\newcommand{\llet}{\textnormal{\texttt{let}}}

\newcommand{\eval}{\mathrm{eval}}
\newcommand{\rec}{\textnormal{\texttt{rec}}}

\newcommand{\lrb}[1]{{\llbracket #1 \rrbracket}}
\newcommand{\lrbv}[1]{{\llbracket #1 \rrbracket}_{\VV}}
\newcommand{\lrbb}[1]{{\llbracket #1 \rrbracket}_{\BB}}

\newcommand{\Wstar}{\ensuremath{\mathbf{W}^*_{\mathrm{NCPSU}}}}
\newcommand{\WNMIU}{\ensuremath{\mathbf{W}^*_{\mathrm{NMIU}}}}



\newcommand{\naturalto}{\ensuremath{\Rightarrow}}



\usetikzlibrary{decorations.pathmorphing}
\usetikzlibrary{decorations.markings}
\usetikzlibrary{decorations.pathreplacing}
\usetikzlibrary{arrows}
\usetikzlibrary{shapes}

\pgfdeclarelayer{edgelayer}
\pgfdeclarelayer{nodelayer}
\pgfsetlayers{edgelayer,nodelayer,main}

\tikzstyle{braceedge}=[decorate,decoration={brace,amplitude=10pt}]
\tikzstyle{square box}=[rectangle,fill=white,draw=black,minimum height=6mm,minimum width=6mm,yshift=0.7mm]
\tikzstyle{wire label}=[font=\footnotesize, auto,swap]

\tikzstyle{none}=[inner sep=0pt]
\tikzstyle{gn}=[circle,fill=Lime,draw=Black,line width=0.8 pt]
\tikzstyle{rn}=[circle,fill=Red,draw=Black, line width=0.8 pt]
\tikzstyle{H}=[rectangle,fill=Yellow,draw=Black]
\tikzstyle{line}=[scalar,fill=White,draw=Black]
\tikzstyle{io}=[rectangle,fill=White,draw=Black]
\tikzstyle{block}=[rectangle,fill=Orange,draw=Black]
\tikzstyle{graph}=[circle,fill=White,draw=Black]
\tikzstyle{empty}=[rectangle,fill=none,draw=none]
\tikzstyle{scaled}=[rectangle,fill=none,draw=none, font=\small]
\tikzstyle{box}=[rectangle,fill=White,draw=Black]
\tikzstyle{dot}=[circle,fill=Black,draw=Black,inner sep=0pt,minimum size=1pt]
\tikzstyle{small dot}=[circle,fill=Black,draw=Black,inner sep=0pt,minimum size=1pt]
\tikzstyle{Dot}=[circle,fill=Black,draw=Black,inner sep=0pt,minimum size=3pt]
\tikzstyle{diam}=[rectangle,fill=Black,draw,yscale=1.2,rotate=45]
\tikzstyle{gangle}=[rectangle,fill=Lime,draw=Black]
\tikzstyle{rangle}=[rectangle,fill=Red,draw=Black]
\tikzstyle{circ}=[circle,fill=none,draw=Black,scale=1.3]
\tikzstyle{ellip}=[ellipse,fill=none,draw=Black,scale=1.3,minimum width =1.3cm]
\tikzstyle{ellip2}=[ellipse,fill=White,draw=Black,scale=1.3,minimum width =3cm]
\tikzstyle{bbox}=[rectangle,fill=Blue,draw=Blue,scale=0.6]
\tikzstyle{gg}=[shape=rectangle,fill=White,draw=Black,dashed]

\tikzstyle{white circle}=[circle,fill=none,draw=Black,scale=1]
\tikzstyle{black circle}=[circle,fill=Black,draw=Black,scale=1]
\tikzstyle{grey circle}=[circle,fill=Gray,draw=Black,scale=1]
\tikzstyle{white rectangle}=[rectangle,fill=none,draw=Black,scale=1]

\tikzstyle{nodev}=[circle,fill=none,draw=Black,scale=1]
\tikzstyle{greynode}=[circle,fill=Grey,draw=Black,scale=1]
\tikzstyle{blacknode}=[circle,fill=Black,draw=Black,scale=1]
\tikzstyle{wirev}=[circle,fill=Black,draw=Black,inner sep=0pt,minimum size=3pt]
\tikzstyle{wirevred}=[circle,fill=Red,draw=Black,inner sep=0pt,minimum size=3pt]

\tikzstyle{simple}=[-,draw=Black]
\tikzstyle{to}=[->,draw=Black]
\tikzstyle{naturalto}=[-{Implies},double distance=1.5pt]
\tikzstyle{bdirected}=[<->,draw=Black]
\tikzstyle{bothdirs}=[bdirected,draw=Black]
\tikzstyle{bothdirsred}=[bdirected,draw=Red]
\tikzstyle{blue}=[-,draw=Blue]
\tikzstyle{redd}=[directed,draw=Red]
\tikzstyle{redu}=[-,draw=Red]
\tikzstyle{blued}=[directed,draw=Blue]
\tikzstyle{dash}=[dashed,draw=Black]
\tikzstyle{ddash}=[->,dashed,draw=Black]
\tikzstyle{dashedd}=[->,dashed]
\tikzstyle{dashedred}=[dashed,draw=Red]

\tikzstyle{equal-arrow}=[double equal sign distance]

\tikzstyle{dotpic}=[scale=0.5]

\tikzstyle{every picture}=[baseline=-0.25em]



\newcommand{
\InputIfFileExists{}{}{
\usetikzlibrary{decorations.pathmorphing}
\usetikzlibrary{decorations.markings}
\usetikzlibrary{decorations.pathreplacing}
\usetikzlibrary{arrows}
\usetikzlibrary{shapes}

\pgfdeclarelayer{edgelayer}
\pgfdeclarelayer{nodelayer}
\pgfsetlayers{edgelayer,nodelayer,main}

\tikzstyle{braceedge}=[decorate,decoration={brace,amplitude=10pt}]
\tikzstyle{square box}=[rectangle,fill=white,draw=black,minimum height=6mm,minimum width=6mm,yshift=0.7mm]
\tikzstyle{wire label}=[font=\footnotesize, auto,swap]

\tikzstyle{none}=[inner sep=0pt]
\tikzstyle{gn}=[circle,fill=Lime,draw=Black,line width=0.8 pt]
\tikzstyle{rn}=[circle,fill=Red,draw=Black, line width=0.8 pt]
\tikzstyle{H}=[rectangle,fill=Yellow,draw=Black]
\tikzstyle{line}=[scalar,fill=White,draw=Black]
\tikzstyle{io}=[rectangle,fill=White,draw=Black]
\tikzstyle{block}=[rectangle,fill=Orange,draw=Black]
\tikzstyle{graph}=[circle,fill=White,draw=Black]
\tikzstyle{empty}=[rectangle,fill=none,draw=none]
\tikzstyle{scaled}=[rectangle,fill=none,draw=none, font=\small]
\tikzstyle{box}=[rectangle,fill=White,draw=Black]
\tikzstyle{dot}=[circle,fill=Black,draw=Black,inner sep=0pt,minimum size=1pt]
\tikzstyle{small dot}=[circle,fill=Black,draw=Black,inner sep=0pt,minimum size=1pt]
\tikzstyle{Dot}=[circle,fill=Black,draw=Black,inner sep=0pt,minimum size=3pt]
\tikzstyle{diam}=[rectangle,fill=Black,draw,yscale=1.2,rotate=45]
\tikzstyle{gangle}=[rectangle,fill=Lime,draw=Black]
\tikzstyle{rangle}=[rectangle,fill=Red,draw=Black]
\tikzstyle{circ}=[circle,fill=none,draw=Black,scale=1.3]
\tikzstyle{ellip}=[ellipse,fill=none,draw=Black,scale=1.3,minimum width =1.3cm]
\tikzstyle{ellip2}=[ellipse,fill=White,draw=Black,scale=1.3,minimum width =3cm]
\tikzstyle{bbox}=[rectangle,fill=Blue,draw=Blue,scale=0.6]
\tikzstyle{gg}=[shape=rectangle,fill=White,draw=Black,dashed]

\tikzstyle{white circle}=[circle,fill=none,draw=Black,scale=1]
\tikzstyle{black circle}=[circle,fill=Black,draw=Black,scale=1]
\tikzstyle{grey circle}=[circle,fill=Gray,draw=Black,scale=1]
\tikzstyle{white rectangle}=[rectangle,fill=none,draw=Black,scale=1]

\tikzstyle{nodev}=[circle,fill=none,draw=Black,scale=1]
\tikzstyle{greynode}=[circle,fill=Grey,draw=Black,scale=1]
\tikzstyle{blacknode}=[circle,fill=Black,draw=Black,scale=1]
\tikzstyle{wirev}=[circle,fill=Black,draw=Black,inner sep=0pt,minimum size=3pt]
\tikzstyle{wirevred}=[circle,fill=Red,draw=Black,inner sep=0pt,minimum size=3pt]

\tikzstyle{simple}=[-,draw=Black]
\tikzstyle{to}=[->,draw=Black]
\tikzstyle{naturalto}=[-{Implies},double distance=1.5pt]
\tikzstyle{bdirected}=[<->,draw=Black]
\tikzstyle{bothdirs}=[bdirected,draw=Black]
\tikzstyle{bothdirsred}=[bdirected,draw=Red]
\tikzstyle{blue}=[-,draw=Blue]
\tikzstyle{redd}=[directed,draw=Red]
\tikzstyle{redu}=[-,draw=Red]
\tikzstyle{blued}=[directed,draw=Blue]
\tikzstyle{dash}=[dashed,draw=Black]
\tikzstyle{ddash}=[->,dashed,draw=Black]
\tikzstyle{dashedd}=[->,dashed]
\tikzstyle{dashedred}=[dashed,draw=Red]

\tikzstyle{equal-arrow}=[double equal sign distance]

\tikzstyle{dotpic}=[scale=0.5]

\tikzstyle{every picture}=[baseline=-0.25em]



\newcommand{\stikz}[2][1]{\scalebox{#1}{\tikzfig{#2}}}
\newcommand{\cstikz}[2][1]{\begin{center}\stikz[#1]{#2}\end{center}}
}
}[1]{
\InputIfFileExists{#1}{}{\input{./tikz/#1}}
}

\newcommand{\stikz}[2][1]{\scalebox{#1}{
\InputIfFileExists{#2}{}{\input{./tikz/#2}}
}}
\newcommand{\cstikz}[2][1]{\begin{center}\stikz[#1]{#2}\end{center}}

\title{Computational Adequacy for Substructural Lambda Calculi}
\date{}
\author{Vladimir Zamdzhiev
  \institute{Universit\'e de Lorraine, CNRS, Inria, LORIA, F 54000 Nancy, France}
}

\begin{document}
\maketitle

\begin{abstract}
Substructural type systems, such as affine (and linear) type systems, are type
systems which impose restrictions on copying (and discarding) of variables, and
they have found many applications in computer science, including quantum
programming. We describe one linear and one affine type systems and we
formulate abstract categorical models for both of them which are sound and
computationally adequate. We also show, under basic assumptions, that
interpreting lambda abstractions via a monoidal closed structure (a popular
method for linear type systems) necessarily leads to degenerate and inadequate
models for call-by-value affine type systems with recursion. In our categorical
treatment, a solution to this problem is clearly presented. Our categorical
models are more general than linear/non-linear models used to study linear
logic and we present a homogeneous categorical account of both linear and
affine type systems in a call-by-value setting. We also give examples with many
concrete models, including classical and quantum ones.
\end{abstract}

\section{Introduction}\label{sec:intro}

Linear Logic~\cite{linear-logic} is a substructural logic where the rules for
contraction and weakening are restricted. The logic has become
very influential in computer science and it has inspired the development of linear type
systems where discarding and copying of variables is restricted in accordance with the substructural rules of linear logic. Closely related to linear type systems,
\emph{affine} type systems are
substructural type systems where only the rule for contraction is restricted, but weakening is completely
unrestricted. Both linear and affine type systems have been used to design quantum programming
languages~\cite{quant-semantics,qpl-fossacs,qpl,qlc-affine}, because they
enforce compliance with the laws of quantum mechanics, where uniform copying of
quantum information is not possible~\cite{no-cloning}.

General recursion is an important computational effect for (linear/affine)
programming languages and it is especially useful in quantum programming, due
to the probabilistic nature of many quantum algorithms and protocols which have
to be repeated until the correct solution is found. When constructing
categorical models for type systems with recursion, an important property is
\emph{computational adequacy}. Computational adequacy may be understood as
formulating an equivalent purely denotational (i.e. mathematical)
characterisation within the model of the operational notion of non-termination.
That is, one should be able to determine whether a program terminates or not
just by considering the interpretation of the program within the categorical
model\footnote{This does not imply decidability of termination, because the
interpretation may not be computable.}.

In this paper, we consider two substructural type systems -- one linear and one
affine (\secref{sec:syntax}) -- and we show how we can interpret both of them
within categorical models based on a double adjunction.  We show that we can
recover the linear/non-linear models of Benton
\cite{benton-wadler,benton-small} as special cases of our models
(\secref{sec:model}). Furthermore, our treatment of both the linear and affine
fragments of the lambda calculus we study is homogeneous -- the interpretation
of both languages are essentially the same and the models for the affine
language require only a single additional axiom. We prove soundness and
computational adequacy results for our categorical models
(\secref{sec:semantics}) and we present many concrete examples, both classical
and quantum.  In our models, we do not assume monoidal closure anywhere, and we
show that, as a special case, if one wishes to use the monoidal closure of the
computational category to interpret lambda abstractions in \emph{linear lambda
calculi} then this leads to a sound and adequate semantics, but doing so for
the call-by-value \emph{affine} language necessarily leads to degenerate models
which are inadequate (\secref{sec:degenerate-semantics}).

The models that we study in this paper are inspired by the categorical models
in \cite{sv-lnl}, but there are some differences which we discuss in
\secref{sub:definition}. Furthermore, the models we consider are also related
to Moggi's computational lambda calculus \cite{moggi}, Levy's
call-by-push-value \cite{cbpv} and the enriched effect calculus \cite{eec}. See
\cite{relating-models} for a detailed analysis of the relationship between
linear lambda calculi and their categorical models.

\section{Syntax and Operational Semantics}
\label{sec:syntax}

We begin by describing the syntax of the calculi that we will study. We will
consider two substructural lambda calculi. The first one is a mixed
linear/non-linear lambda calculus which we name $\lnl$ and the second one is a
mixed affine/non-linear lambda calculus which we name $\anl$.  Figure
\ref{fig:syntax} describes the term language of both calculi and also their
types and contexts. The two calculi only differ in their formation rules, which
we will introduce shortly.  We note that $\lnl$ has been studied in
\cite{pqm-small} (excluding recursion it appears as a fragment of
Proto-Quipper) and also in \cite{eclnl,eclnl2}.

The non-linear types (ranged over by variables $P,R$) form a subset of our
types (ranged over by variables $A,B,C$). We also distinguish between
non-linear contexts (ranged over by $\Phi$) and arbitrary contexts (ranged over
by $\Gamma, \Sigma$). Non-linear contexts contain only variables of non-linear
types, whereas arbitrary contexts may contain variables of arbitrary types
(which could be linear).

In both calculi, contraction is restricted to
non-linear types only. That is, variables of non-linear type may always be
duplicated, but in general, we do not allow copying of variables of arbitrary types
(because such a type could be linear). The only difference between $\lnl$ and
$\anl$ is that in the former, weakening is restricted to non-linear types,
whereas in the latter weakening is not restricted. This means, only variables of
non-linear type may be discarded in $\lnl$, but in $\anl$ all variables are
discardable. This is enforced by presenting different term formation rules for the
two calculi (see Figure \ref{fig:term-formation}).
In Figure \ref{fig:term-formation}, we write, as usual, $\Gamma, \Sigma$ for the union
of two disjoint contexts and $\Gamma \vdash m : A$ to indicate that term $m$ is well-formed
under context $\Gamma$ and has type $A$.
The \emph{values} are special terms which reduce to themselves in the operational semantics (see Figure \ref{fig:syntax}).
A value $\Gamma \vdash v :A$ is said to be \emph{non-linear} whenever $A$ is a non-linear type and then it is easy to
see that $\Gamma$ must also be non-linear.
See \cite{eclnl2,pqm-small} for a more detailed discussion of the syntax.

\begin{figure}
\centering
\begin{tabular}{l  l  l  l}
  Variables & $x,y,z$ & & \\
	Types & $A, B, C$ &                                   ::= & $I$  | $A+B$ | $A\otimes B$ | $A \multimap B$ | $!A$ \\
	Non-linear types & $P, R$ &                           ::= & $I$  | $P+R$ | $P\otimes R$ | $!A$ \\
  Contexts & $\Gamma , \Sigma $ &                      ::= & $x_1: A_1, x_2: A_2, \ldots, x_n : A_n$\\
  Non-linear contexts & $\Phi$ &                   ::= & $x_1: P_1, x_2: P_2, \ldots, x_n : P_n$\\
  Terms & $m, n, p$ & ::= & $x$ | $*$ | $m;n$ | \lleft$_{A,B} m$ | \rright$_{A,B} m$ \\
  & & &| \ccase{} $m$ \texttt{of} $\{$\lleft{} $x\to n\ $\rright{} $y \to p\}$ \\ 
  & & &| $\langle m, n \rangle$ | \llet{} $\langle x, y \rangle = m$ \texttt{in} $n$ | $\lambda x^A.m$ | $mn$  \\
  & & &| \lift{} $m$ | \force{} $m$ | $\rec\ z^{!A}. m$ \\
  Values & $v,w$ & ::= & $x$ | $*$ | \lleft$_{A,B} v$ | \rright$_{A,B} v$ | $\langle v, w \rangle$ | $\lambda x^A.m$ | \lift{} $m$ 
\end{tabular}
\caption{Types, terms and contexts of the $\lnl$ and $\anl$ calculi.}
\label{fig:syntax}
\end{figure}
\begin{figure}[p]
{%
{%
  \[
    \begin{bprooftree}
    \AxiomC{\phantom{$\vdash$}}
    \RightLabel{(for $\lnl$)}
    \UnaryInfC{$ \Phi, x:A \vdash x: A$}
    \end{bprooftree}
    \quad
    \begin{bprooftree}
    \AxiomC{\phantom{$\vdash$}}
    \RightLabel{(for $\lnl$)}
    \UnaryInfC{$ \Phi \vdash * : I$}
    \end{bprooftree}
    \quad
    \begin{bprooftree}
    \def\ScoreOverhang{0.5pt}
    \AxiomC{$ \Phi \vdash m : A$}
    \RightLabel{(for $\lnl$)}
    \UnaryInfC{$ \Phi \vdash \lift\ m :\ !A$}
    \end{bprooftree}
  \]

  \[
    \begin{bprooftree}
    \AxiomC{\phantom{$\vdash$}}
    \RightLabel{(for $\anl$)}
    \UnaryInfC{$ \Gamma, x:A \vdash x: A$}
    \end{bprooftree}
    \quad
    \begin{bprooftree}
    \AxiomC{\phantom{$\vdash$}}
    \RightLabel{(for $\anl$)}
    \UnaryInfC{$ \Gamma \vdash * : I$}
    \end{bprooftree}
    \quad
    \begin{bprooftree}
    \def\ScoreOverhang{0.5pt}
    \AxiomC{$ \Phi \vdash m : A$}
    \RightLabel{(for $\anl$)}
    \UnaryInfC{$ \Phi, \Gamma \vdash \lift\ m :\ !A$}
    \end{bprooftree}
  \]

  \[
    \begin{bprooftree}
    \AxiomC{$ \Phi, \Gamma \vdash m : I$}
    \AxiomC{$ \Phi, \Sigma \vdash n : A$}
    \BinaryInfC{$ \Phi, \Gamma, \Sigma \vdash  m; n : A$}
    \end{bprooftree}
    \quad
    \begin{bprooftree}
    \def\ScoreOverhang{0.5pt}
    \AxiomC{$ \Gamma \vdash m :\ !A$}
    \UnaryInfC{$ \Gamma \vdash \force\ m : A$}
    \end{bprooftree}
  \]

  \[
    \quad
    \begin{bprooftree}
    \AxiomC{$ \Gamma \vdash m : A$}
    \UnaryInfC{$ \Gamma \vdash \lleft_{A,B} m : A+B$}
    \end{bprooftree}
    \quad
    \begin{bprooftree}
    \AxiomC{$ \Gamma \vdash m : B$}
    \UnaryInfC{$ \Gamma \vdash \rright_{A,B} m : A+B$}
    \end{bprooftree}
  \]

  \[
    \begin{bprooftree}
    \def\ScoreOverhang{0.5pt}
    \AxiomC{$ \Phi, \Gamma \vdash m : A+B$}
    \AxiomC{$ \Phi, \Sigma, x : A \vdash n : C$}
    \AxiomC{$ \Phi, \Sigma, y : B \vdash p : C$}
    \TrinaryInfC{$ \Phi, \Gamma, \Sigma \vdash \ccase\ m\ \texttt{of}\ \{\lleft\ x \to n\ |\ \rright\ y \to p\} : C$}
    \end{bprooftree}
  \]

  \[
    \begin{bprooftree}
    \def\ScoreOverhang{0.5pt}
    \AxiomC{$ \Phi, \Gamma \vdash m : A$}
    \AxiomC{$ \Phi, \Sigma \vdash n : B$}
    \BinaryInfC{$ \Phi, \Gamma, \Sigma \vdash \langle m, n \rangle : A \otimes B$}
    \end{bprooftree}
    \ 
    \begin{bprooftree}
    \def\ScoreOverhang{0.5pt}
    \AxiomC{$ \Phi,\Gamma \vdash m : A\otimes B$}
    \AxiomC{$ \Phi, \Sigma,x:A,y:B \vdash n : C$}
    \BinaryInfC{$ \Phi, \Gamma, \Sigma \vdash\llet\ \langle x,y \rangle=m\ \texttt{in}\ n:C$}
    \end{bprooftree}
  \]
  
  \[
    \begin{bprooftree}
    \def\ScoreOverhang{0.5pt}
    \AxiomC{$ \Gamma, x: A \vdash m : B$}
    \UnaryInfC{$ \Gamma \vdash \lambda x^A . m : A \multimap B$}
    \end{bprooftree}
    \quad
    \begin{bprooftree}
    \def\ScoreOverhang{0.5pt}
    \AxiomC{$ \Phi, \Gamma \vdash m : A \multimap B$}
    \AxiomC{$ \Phi, \Sigma \vdash n : A$}
    \BinaryInfC{$ \Phi, \Gamma, \Sigma \vdash mn : B$}
    \end{bprooftree}
  \]

  \[
    \begin{bprooftree}
    \AxiomC{$ \Phi, z:!A \vdash m: A$}
    \UnaryInfC{$ \Phi \vdash \rec\ z^{!A}. m: A$}
    \end{bprooftree}
  \]

  \[ \text{where } \Gamma \cap \Sigma = \varnothing.  \]
}%
}%
\caption{Formation rules for $\lnl$ and $\anl$ terms.}
\label{fig:term-formation}
\end{figure}

\begin{figure}[p]
\[
\begin{bprooftree}
  \def\ScoreOverhang{0.5pt}
\AxiomC{{\color{white} $\Downarrow$}}
\UnaryInfC{$x \Downarrow x$}
\end{bprooftree}
\quad
\begin{bprooftree}
  \def\ScoreOverhang{0.5pt}
\AxiomC{{\color{white} $\Downarrow$}}
\UnaryInfC{$* \Downarrow *$}
\end{bprooftree}
\quad
\begin{bprooftree}
  \def\ScoreOverhang{0.5pt}
\AxiomC{$m \Downarrow *$}
\AxiomC{$n \Downarrow v$}
\BinaryInfC{$m;n \Downarrow v$}
\end{bprooftree}
\]

\[
\begin{bprooftree}
  \def\ScoreOverhang{0.5pt}
\AxiomC{$m \Downarrow v$}
\UnaryInfC{$\lleft\ m \Downarrow \lleft\ v$}
\end{bprooftree}
\quad
\begin{bprooftree}
  \def\ScoreOverhang{0.5pt}
\AxiomC{$m \Downarrow v$}
\UnaryInfC{$\rright\ m \Downarrow \rright\ v$}
\end{bprooftree}
\]

\[
\begin{bprooftree}
  \def\ScoreOverhang{0.5pt}
\AxiomC{$m \Downarrow \lleft\ v$}
\AxiomC{$n[v/x] \Downarrow w$}
\BinaryInfC{$\ccase\ m\ \texttt{of}\ \{\lleft\ x \to n\ |\ \rright\ y \to p\} \Downarrow w$}
\end{bprooftree}
\quad
\begin{bprooftree}
  \def\ScoreOverhang{0.5pt}
\AxiomC{$m \Downarrow v$}
\AxiomC{$n \Downarrow w$}
\BinaryInfC{$\langle m, n \rangle \Downarrow \langle v, w \rangle$}
\end{bprooftree}
\]

\[
\begin{bprooftree}
  \def\ScoreOverhang{0.5pt}
\AxiomC{$m \Downarrow \rright\ v$}
\AxiomC{$p[v/y] \Downarrow w$}
\BinaryInfC{$\ccase\ m\ \texttt{of}\ \{\lleft\ x \to n\ |\ \rright\ y \to p\} \Downarrow w$}
\end{bprooftree}
\quad
\begin{bprooftree}
  \def\ScoreOverhang{0.5pt}
\AxiomC{$m \Downarrow \langle v, v' \rangle$}
\AxiomC{$n[v/x, v'/y] \Downarrow w$}
\BinaryInfC{$\llet\ \langle x, y \rangle = m\ \text{in}\ n \Downarrow w$}
\end{bprooftree}
\]

\[
\begin{bprooftree}
  \def\ScoreOverhang{0.5pt}
  \AxiomC{{\color{white} $\Downarrow$}}
\UnaryInfC{$\lambda x.m \Downarrow \lambda x.m$}
\end{bprooftree}
\quad
\begin{bprooftree}
	\def\ScoreOverhang{0.5pt}
	\AxiomC{$m \Downarrow \lambda x.m'$}
	\AxiomC{$n \Downarrow v$}
	\AxiomC{$m'[v/x]\Downarrow w$}
	\TrinaryInfC{$mn \Downarrow w$}
\end{bprooftree}
\]

\[
\begin{bprooftree}
  \def\ScoreOverhang{0.5pt}
\AxiomC{{\color{white} $\Downarrow$}}
\UnaryInfC{$\lift\ m \Downarrow \lift\ m$}
\end{bprooftree}
\quad
\begin{bprooftree}
	\def\ScoreOverhang{0.5pt}
	\AxiomC{$m \Downarrow \lift\ m'$}
	\AxiomC{$m' \Downarrow v$}
	\BinaryInfC{$\force\ m \Downarrow v$}
\end{bprooftree}
\quad
\begin{bprooftree}
\AxiomC{$m[\lift\ \rec\ z^{!A}. m\ /\ z] \Downarrow v$}
\UnaryInfC{$\rec\ z^{!A}. m \Downarrow v$}
\end{bprooftree}
\]
\caption{Operational semantics of the $\lnl$ and $\anl$ calculi.}
\label{fig:operational}
\end{figure}

The operational semantics of $\lnl$ and $\anl$ is defined in the same way and it is standard.
It is defined in terms of a big-step call-by-value reduction relation in Figure~\ref{fig:operational}.
Writing $m \Downarrow v$ should be understood as saying that term $m$ would eventually reduce to the value $v$, at which point termination occurs.
We shall also say that a term $m$ \emph{terminates}, denoted by $m \Downarrow$, whenever there exists
a value $v$, such that $m \Downarrow v.$
Because of the presense of recursion, not all terms terminate. For example, the simplest
non-terminating program of type $A$ is $\cdot \vdash \rec\ z^{!A}.\ \force\ z : A$.
As expected, our languages satisfy subject reduction, i.e., type assignment is
preserved under term evaluation.

\begin{theorem}[Subject reduction]\label{thm:subject-reduction}
If $\Gamma \vdash m :A$ and $m \Downarrow v,$ then
$ \Gamma \vdash v :A$.
\end{theorem}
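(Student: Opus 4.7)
The plan is to prove the statement by structural induction on the derivation of $m \Downarrow v$, mirroring the rules of Figure~\ref{fig:operational}. The base cases, corresponding to the rules with no premises ($x \Downarrow x$, $* \Downarrow *$, $\lambda x.m \Downarrow \lambda x.m$, $\lift\ m \Downarrow \lift\ m$), are immediate since the value produced is syntactically identical to the starting term, so the same typing derivation witnesses both judgements. The inductive cases for the purely value-constructing rules ($\lleft$, $\rright$, pairing, sequencing) are then direct applications of the induction hypothesis followed by a re-application of the associated formation rule from Figure~\ref{fig:term-formation}.

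The key ingredient for the remaining elimination-style cases is an auxiliary \emph{substitution lemma} of roughly the following form: if $\Phi, \Gamma, x:A \vdash n : C$ and $\Phi, \Sigma \vdash v : A$ with $v$ a value and $\Gamma$, $\Sigma$ disjoint, then $\Phi, \Gamma, \Sigma \vdash n[v/x] : C$. One proves this by induction on the typing derivation of $n$, and the argument applies uniformly to both $\lnl$ and $\anl$: the crucial observation, already highlighted in the text, is that a value of non-linear type is automatically typed in a non-linear context, so substituting it for a variable of type $!A$ never violates the non-linearity restrictions that $\lnl$ imposes on $\Phi$ in rules such as $\lift$, $\rec$, and the variable axiom.

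With the substitution lemma in hand, the cases for $\ccase{}$, $\llet{}$, application, and $\force{}$ are routine: the induction hypothesis applied to each premise computation yields well-typed intermediate values, inversion on the corresponding introduction rule recovers the bound body in the enlarged context, substitution produces a well-typed term, and a final appeal to the induction hypothesis delivers the desired typing of the resulting value. The only case requiring a small extra step is $\rec\ z^{!A}.m$: one first reassembles the typing $\Phi \vdash \lift\ \rec\ z^{!A}.m :\ !A$ from the assumption $\Phi, z:!A \vdash m : A$ using the $\lift$ rule, and then substitutes this value for $z$ via the lemma before invoking the induction hypothesis on $m[\lift\ \rec\ z^{!A}.m/z] \Downarrow v$. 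The main obstacle is formulating the substitution lemma precisely enough to respect the linear/non-linear context split of $\lnl$ — once that bookkeeping is settled, every case of the induction on $m \Downarrow v$ is entirely mechanical.
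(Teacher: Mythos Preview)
The paper states this theorem without proof, treating it as a routine syntactic fact; your proposal supplies the standard argument (induction on the derivation of $m \Downarrow v$ together with a substitution lemma respecting the linear/non-linear context split), which is correct and is precisely the proof one would expect here.
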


\begin{assumption}
Throughout the remainder of the paper, we assume that all terms are well-formed.
\end{assumption}

\section{Categorical Models}\label{sec:model}

In this section we describe the categorical models that we will use to
interpret our substructural lambda calculi (\secref{sub:definition}).
Afterwards, we consider the relationship of our models to other models of
intuitionistic linear logic (\secref{sub:relation}), we then formulate some
additional axioms that ensure computational adequacy holds
(\secref{sub:cpo-lnl}) and we conclude the section with concrete examples
(\secref{sub:examples}).

\subsection{Definition of the Models}
\label{sub:definition}

We start with the model for $\lnl$ which serves as the basic model of our
development. All subsequent models that we will present are specific instances
of it where some additional structure is assumed. Our model is very similar to
the one studied in \cite{sv-lnl}, but in our language we allow lifting of terms
(and not just values), so we treat "!" in a more computational way by defining
it as an endofunctor on $\CC$ (which is the usual interpretation of !) instead
of as an endofunctor on $\VV$ (which is done in \cite{sv-lnl} in order to
ensure some coherence properties which we do not need in the present paper).

\begin{definition}[$\lnl$-model]
\label{def:main-model}
A \emph{(compact) $\lnl$-model} is given by the following data:
\begin{enumerate}
  \item A cartesian category $(\BBB, \times, 1)$ with finite coproducts $(\BBB, \amalg, \varnothing);$
  \item A symmetric monoidal category $(\VVV, \otimes_{\VVV}, I_V)$ with finite coproducts $(\VVV, +_{\VVV}, 0_{\VVV});$
  \item A symmetric monoidal category $(\CCC, \otimes, I)$ with finite coproducts $(\CCC, +, 0);$
  \item A pair of symmetric monoidal adjunctions $\stikz{adjunctions.tikz}.$ We shall also write $F \coloneqq LJ : \BBB \to \CCC$, $G \coloneqq KR : \CCC \to \BBB$ and $! \coloneqq FG : \CCC \to \CCC$
    and we write $\eta : \Id \naturalto GF$ and $\epsilon : ! \naturalto \Id$ for the unit and counit, respectively, of the adjunction $F \dashv G$.
  \item For every $A \in \Ob(\VVV)$, an adjunction $L \circ (- \otimes_{\VVV} A) \dashv (A \multimap -) : \CC \to \VVV,$ called \emph{currying};
  \item[(6.)] The comonad endofunctor $! : \CCC \to \CCC$ is algebraically compact in a parameterised sense:
    for every $B \in \Ob(\BBB)$, the functor $FB \otimes !(-) : \CCC \to \CCC$ has an initial algebra
    $FB \otimes !\Omega \xrightarrow{\omega} \Omega$, such that $FB \otimes !\Omega \xleftarrow{\omega^{-1}} \Omega$ is its final coalgebra.
\end{enumerate}
\end{definition}

Let us now explain how the above data will be used for the interpretation of
$\lnl$.

The category $\BBB$ is the \emph{base} category and it has sufficient
structure to interpret non-linear values. Non-linear values are always
discardable and duplicable and because of this, $\BBB$ is assumed to be a
cartesian category. Moreover, in all of our concrete models, the above adjunctions lift to $\BBB$-enriched adjunctions and $\BBB$
serves as the \emph{base} of enrichment.

The category $\VVV$ is the category in which we interpret the \emph{values} of $\lnl$, whether they are non-linear or not.
The category $\CCC$ is the category in which we interpret all terms or \emph{computations} of $\lnl$. Because the language
is call-by-value, we have that $\lambda x^A. m$ is a value for any term $m$. Condition (5.) then allows us to interpret this by currying
the interpretation of $m$. In order to interpret the $!$ which is used for promotion of terms (especially lambda abstractions), we use
condition (4.) which ensures this can be done in a coherent way, for both values and computations. Finally, condition (6.) is used to
interpret recursion.

In many concrete $\lnl$ models, the category $\VVV$ is monoidal closed (this is the case for all concrete models we present) and the next lemma shows that condition (5.)
is then automatically satisfied.

\begin{lemma}
\label{lem:kleisli-exponential}
Assume we are given the same categorical data as in Definition \ref{def:main-model} with the exception of condition (5.). Assume further $\VVV$ is monoidal closed with $(- \otimes_{\VVV} A) \dashv (A \multimap_{\VVV} -) : \VVV \to \VVV.$
It then follows condition (5.) is satisfied.
\end{lemma}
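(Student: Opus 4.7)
The plan is to realise condition (5.) by composing two adjunctions already available: the monoidal closure of $\VVV$ that is assumed in the statement, and the symmetric monoidal adjunction $L \dashv R : \CCC \to \VVV$ supplied by condition (4.). Concretely, for each $A \in \Ob(\VVV)$ I would define the required functor $(A \multimap -) : \CCC \to \VVV$ by setting $A \multimap Y := A \multimap_{\VVV} R(Y)$, where $R$ is the right adjoint from (4.). Functoriality in $Y$ then follows automatically from the functoriality of $R$ and of $A \multimap_{\VVV} -$.

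The desired bijection of hom-sets arises as a two-step chain. Starting from $\CCC(L(X \otimes_{\VVV} A),\, Y)$, the adjunction $L \dashv R$ yields a natural isomorphism with $\VVV(X \otimes_{\VVV} A,\, R(Y))$, and then the monoidal closure of $\VVV$ yields a further natural isomorphism with $\VVV(X,\, A \multimap_{\VVV} R(Y)) = \VVV(X,\, A \multimap Y)$. The composite bijection is natural in $X \in \VVV$ and $Y \in \CCC$ because each of the two constituent isomorphisms is.

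Conceptually this is just the standard fact that adjoints compose: the left adjoint $L \circ (- \otimes_{\VVV} A) : \VVV \to \CCC$ is the composite of $(- \otimes_{\VVV} A) : \VVV \to \VVV$ followed by $L : \VVV \to \CCC$, whose respective right adjoints are $A \multimap_{\VVV} -$ and $R$; their composite $A \multimap_{\VVV} R(-)$ is therefore right adjoint to the composite left adjoint. For this reason I do not expect any genuine obstacle here. The only thing to track carefully is the bookkeeping of naturality in both variables, which is immediate from the naturality of each of the two factor isomorphisms, together with the observation that the construction of $A \multimap Y$ does not interfere with either of the underlying adjunctions.
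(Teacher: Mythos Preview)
Your proposal is correct and matches the paper's own argument exactly: the paper's proof is the single line ``Because $L \circ (- \otimes_{\VVV} A) \dashv (A \multimap_{\VVV} -) \circ R : \CCC \to \VVV$,'' which is precisely your observation that adjoints compose, with the right adjoint given by $A \multimap_{\VVV} R(-)$. You have simply spelled out the hom-set bijection that the paper leaves implicit.
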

\begin{proof}
Because $L \circ (- \otimes_{\VVV} A) \dashv (A \multimap_{\VVV} -) \circ R : \CCC \to \VVV.$
\end{proof}

Next, we formulate a categorical model for $\anl$. It can be easily recovered from models of $\lnl$ with one additional assumption.

\begin{definition}[$\anl$-model]
\label{def:anl-model}
A \emph{(compact) $\anl$-model} is given by a (compact) $\lnl$-model, where the tensor unit $I_{\VVV}$ is a terminal object of $\VVV$.
\end{definition}

\subsection{Relationship to LNL Models}
\label{sub:relation}

We will compare our models to models of intuitionistic linear logic which are also known as linear/non-linear (LNL) models \cite{benton-wadler,benton-small}.
Our models are tightly related to LNL models, but there are some subtle differences that stem from the choice of how to interpret lambda abstractions.

\begin{definition}[LNL model]
A \emph{(compact) linear/non-linear} model is given by the following data:
\begin{enumerate}
  \item A cartesian category $(\BBB, \times, 1)$ with finite coproducts $(\BBB, \amalg, \varnothing);$
  \item A symmetric monoidal \emph{closed} category $(\CCC, \otimes, \multimap, I)$ with finite coproducts $(\CCC, +, 0);$
  \item A symmetric monoidal adjunction $\stikz{adjunction-lnl.tikz}.$ 
  \item[(4.)] The functor $! = FG : \CC \to \CC$ is algebraically compact in a parameterised sense (Definition \ref{def:main-model}.6).
\end{enumerate}
\end{definition}

\begin{remark}
In the original definition of LNL models, the category $\BBB$ is assumed to be
cartesian closed. However, this is not necessary for our purposes, so we omit
this from the definition.  Nevertheless, in all concrete models we consider in
this paper, the category $\BBB$ is cartesian closed.
\end{remark}

In an LNL model, the category $\CCC$ is assumed to be monoidal closed which is
used for the interpretation of lambda abstractions, whereas in our models we do
not assume monoidal closure anywhere.  The other big difference is that values
and computations are both interpreted in the same category $\CCC$ of an LNL
model. The implications of this on the semantics is discussed in
\secref{sec:degenerate-semantics}.

We will now show that the notion of $\lnl$-model is more general than that of an LNL model.

\begin{proposition}
\label{prop:compare-models}
Every (compact) LNL model $\stikz{adjunction-lnl.tikz}$ induces a (compact) $\lnl$-model given by
$ \stikz{adjunction-degenerate.tikz}, $
where $\VVV = \CCC$ and $L = R = \Id.$ Moreover, in this case, the denotational semantics in \secref{sec:semantics} collapses precisely to the denotational semantics of \cite{eclnl,eclnl2}.
\end{proposition}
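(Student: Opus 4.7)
The plan is to construct the $\lnl$-model data directly from the LNL model and then verify each of the six conditions of Definition~\ref{def:main-model}. Given an LNL model with cartesian $\BBB$, symmetric monoidal closed $\CCC$, and symmetric monoidal adjunction $F \dashv G : \BBB \to \CCC$ satisfying parameterised algebraic compactness, I set $\VVV \coloneqq \CCC$ with its given symmetric monoidal and finite coproduct structure, take $J \coloneqq F$ and $K \coloneqq G$ for the adjunction $J \dashv K : \BBB \to \VVV$, and let $L \coloneqq \Id_{\CCC}$ and $R \coloneqq \Id_{\CCC}$ for the second adjunction $L \dashv R : \VVV \to \CCC$. With these choices, the composite functors of Definition~\ref{def:main-model} recover $F = LJ = F$ and $G = KR = G$ from the original LNL model, and the comonad $! = FG = JK$ likewise coincides with the original $!$.

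Next I would verify the axioms one by one. Conditions (1)--(3) hold immediately. For condition (4), the adjunction $J \dashv K$ is the given LNL adjunction, which is symmetric monoidal by hypothesis, and $\Id \dashv \Id$ on $\CCC$ is trivially a symmetric monoidal adjunction (all structure maps being identities). Condition (5), the currying adjunction $L \circ (- \otimes_{\VVV} A) \dashv (A \multimap -) : \CCC \to \VVV$, reduces, after substituting $L = \Id$ and $\VVV = \CCC$, to the standard monoidal-closed structure $(- \otimes A) \dashv (A \multimap -)$ on $\CCC$, which is part of the LNL model data (and also follows from Lemma~\ref{lem:kleisli-exponential}). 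Condition (6) is verbatim the parameterised algebraic compactness assumption on the induced $!$, which coincides with the LNL $!$ as noted above.

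For the second assertion, that the denotational semantics in \secref{sec:semantics} collapses to that of \cite{eclnl,eclnl2}, I would proceed by structural induction on types and terms. Since $\VVV = \CCC$ and $L, R$ are identities, the ``value'' interpretation $\lrbv{-}$ and the ``computation'' interpretation $\lrbb{-}$ of a type land on the same object of $\CCC$, and the interpretation of every term former reduces to its LNL counterpart: tensors and sums use the structure of $\CCC$; currying uses the monoidal closure of $\CCC$; promotion and force use the comonad $! = FG$ together with its unit and counit; and the recursion combinator is interpreted by the same parameterised initial algebra $\omega$ as in \cite{eclnl,eclnl2}. Because every inserted instance of $L$ or $R$ is an identity and every unit/counit of $L \dashv R$ is an identity natural transformation, the coercions between value and computation worlds vanish, yielding an interpretation syntactically identical to the one in \cite{eclnl,eclnl2}.

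The only nonroutine step is the final comparison of interpretations: strictly speaking one must check that the coherence data (the monoidal comparison maps, the units and counits, and the initial-algebra morphism) used in \secref{sec:semantics} reduce, under $L = R = \Id$, to exactly the data chosen in \cite{eclnl,eclnl2}, rather than merely being canonically isomorphic to them. This is the main bookkeeping obstacle, but it is resolved by observing that the identity-choice of $L$ and $R$ forces all the coherence isomorphisms between $\VVV$ and $\CCC$ to be identities, so the two interpretations agree on the nose. \QED
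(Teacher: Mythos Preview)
Your proposal is correct, and indeed more detailed than what the paper provides: the proposition is stated in the paper without proof, as the verification is immediate from the definitions once one substitutes $\VVV = \CCC$ and $L = R = \Id$. Your condition-by-condition check is exactly the right way to make this explicit, and your observation that condition (5) reduces to the monoidal closure of $\CCC$ (equivalently, an instance of Lemma~\ref{lem:kleisli-exponential}) is the one place where something slightly nontrivial happens.
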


Next, let us consider a $\anl$-model (and therefore also a
$\lnl$-model), which has been used to interpret the quantum lambda calculus
(without recursion) \cite{kenta-bram} and the first-order quantum programming
language QPL (which admits recursion, but not lambda abstractions)
\cite{qpl-tcs,qpl-fossacs}.

\begin{example}
Let $\Wstar$ be the category of W*-algebras and normal completely-positive subunital maps and let $\WNMIU$ be its full-on-objects subcategory of normal multiplicative involutive unital maps.
Setting $\VV = (\WNMIU)^\op$ and $\CCC = (\Wstar)^\op$, one can define a $\anl$-model
$ \stikz{quantum-adjunction.tikz} $
the details of which are described in \cite{kenta-bram}. Moreover, in this case, the category $\CCC$ is \emph{not} monoidal closed\footnote{Bert Lindenhovius. Personal Communication.}.
\end{example}

Because the category $\CCC$ is not monoidal closed, we see that we cannot interpret lambda abstractions as in an LNL model in this case.
However, our $\lnl$-model does have sufficient
structure and lambda abstractions in \cite{kenta-bram} are (concretely)
interpreted in the same way as our (abstract) formulation in
\secref{sec:semantics}.
Therefore, by interpreting linear lambda calculi within (compact) $\lnl$-models, instead of (compact) LNL models, we can discover a larger range of concrete models for these languages.

\begin{assumption}
Throughout the remainder of the paper we only consider compact models. For brevity, when we write "LNL/$\lnl$/$\anl$-model" we implicitly assume the model is also compact.
\end{assumption}

\subsection{Computationally Adequate Models}
\label{sub:cpo-lnl}

It is possible to construct sound $\lnl$-models which are not computationally adequate. Let's consider an obvious example.

\begin{example}
\label{ex:degenerate}
The $\anl$-model $\stikz{degenerate.tikz}$ is not computationally adequate.
\end{example}

Of course, this model is completely degenerate and there is no way to
distinguish between terminating and non-terminating computations within it.
Computationally adequate models are often axiomatised in domain-theoretic terms
and we shall do so as well. Let $\cpo$ be the category with objects given
by cpo's (posets which have suprema of increasing $\omega$-chains), and with
morphisms given by Scott-continuous functions (monotone functions which
preserve suprema of increasing $\omega$-chains). Let $\cpobs$ be the
subcategory of $\cpo$ consisting of \emph{pointed} cpo's (cpo's with a least
element) and \emph{strict Scott-continuous functions} (Scott-continuous
functions that preserve the least element).

\begin{definition}
\label{def:order-model}
We shall say that a $\lnl$-model ($\anl$-model) $\stikz{adjunctions.tikz}$ is \emph{order-enriched} if:
$\BB$ and $\VV$ are $\cpo$-enriched categories, $\CC$ is a $\cpobs$-enriched category, their
coproduct and monoidal structures are $\cpo$-enriched and the functors $L,R,J,K,\multimap$ are $\cpo$-enriched functors.
\end{definition}

\begin{definition}
\label{def:adequate-model}
We shall say that a $\lnl$-model $(\anl$-model) $\stikz{adjunctions.tikz}$ is \emph{adequate} if
it is order-enriched and $\id_I \neq \perp_{I,I},$ where $\perp_{A,B}$ is the least element in the hom-cpo $\CC(A,B).$
\end{definition}

In the next section, we will show that these models are true to their name.
\subsection{Concrete Models}
\label{sub:examples}

We conclude the section by considering some concrete models.

\begin{example}
\label{ex:adequate1}
The adjunctions $\stikz{dcpo1.tikz}$ form a computationally adequate $\lnl$-model, where $U$ is the forgetful functor and $(-)_\perp$ is domain-theoretic lifting (freely adding a least element).
\end{example}

The above data, in fact, determines an LNL model.

\begin{example}
The adjunctions $\stikz{dcpo2.tikz}$ form a computationally adequate $\anl$-model.
\end{example}

In the above two examples, every object has a canonical comonoid
structure and because of this, they are not truly representative models for
linear and affine calculi. Next, we consider models where this does not hold.

\begin{example}
\label{ex:adequate2}
Let $\M$ be an arbitrary symmetric monoidal category. We can see $\M$ as a $\cpo$-enriched category when equipped with the discrete order (this is the free $\cpo$-enrichment of $\M$).
Let $\M_\perp$ be the category obtained from $\M$ by freely adding a least element to each hom-cpo (this is the free $\cpobs$-enrichment of $\M$).
Writing $\VV = [\M^{\text{op}}, \dcpo]$ for the indicated $\cpo$-enriched functor category and $ \CC = [\M_\perp^{\text{op}}, \dcpobs]$ for the indicated $\cpobs$-functor category,
we get a computationally adequate $\lnl$-model (see \cite{eclnl,eclnl2} for more discussion and details). Moreover, if the tensor unit of $\M$ is also a terminal object, then this data
is a computationally adequate $\anl$-model.
\end{example}

The above example shows a concrete model that has been used to interpret Proto-Quipper-M \cite{eclnl,eclnl2,pqm-small}, a (quantum) circuit description language.
The final model we consider is also inspired by quantum programming. It is a model of Proto-Quipper-M that supports recursive types.

\begin{example}
\label{ex:adequate3}
Let $\mathbf{qCPO}$ be the category of quantum cpo's \cite{klm} and let $\mathbf{qCPO}_{\perp!}$ be the subcategory of $\mathbf{qCPO}$ of pointed objects and strict maps.
Then the model
\[ \stikz{qcpo.tikz} \]
described in \cite{klm} is a computationally adequate $\anl$-model.
\end{example}

\section{Denotational Semantics}
\label{sec:semantics}

In this section we show how to interpret our substructural lambda calculi
within the categorical models we discussed. Every type $A$ admits an interpretation
as an object $\lrbv{A} \in \Ob(\VV)$ and as an object $\lrb{A} \in \Ob(\CC)$. In addition, every non-linear type $P$
admits an interpretation $\lrbb{P} \in \Ob(\BB).$ These interpretations are defined in Figure \ref{fig:type-interpretation} by simultaneous induction on the structure of types.
The three different type interpretations are nicely related by coherent natural isomorphisms.

\begin{figure}
\begin{align*}
\lrb{I}             &= I                           & \lrbv{I}             &= I_{\VV}                       & \lrbb I             &= 1  \\
\lrb{!A}            &= !\lrb A                     & \lrbv{!A}            &= JG \lrb A                     & \lrbb {!A}          &= G \lrb A  \\
\lrb{A+B}           &= \lrb A + \lrb B             & \lrbv{A+B}           &= \lrbv A +_{\VV} \lrbv B       & \lrbb {P+R}         &= \lrbb P \amalg \lrbb R  \\
\lrb{A \otimes B}   &= \lrb A \otimes \lrb B       & \lrbv{A \otimes B}   &= \lrbv A \otimes_{\VV} \lrbv B & \lrbb {P \otimes R} &= \lrbb P \times \lrbb R  \\
\lrb{A \multimap B} &= L(\lrbv A \multimap \lrb B) & \lrbv{A \multimap B} &= \lrbv A \multimap \lrb B      &  & 
\end{align*}
\caption{Interpretation of types.}
\label{fig:type-interpretation}
\end{figure}

\begin{proposition}
\label{prop:types}
For every type $A: \lrb{A} \cong L\lrbv{A}$. For every non-linear type $P: \lrbv P \cong J \lrbb P$ and so $\lrb P \cong L\lrbv P \cong F \lrbb P$.
\end{proposition}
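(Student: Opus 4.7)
The plan is to prove both isomorphisms by simultaneous induction on the structure of (non-linear) types, leaning on three standard facts: (a) $L : \VV \to \CC$ and $J : \BB \to \VV$ are strong symmetric monoidal functors, since they are the left adjoints in the symmetric monoidal adjunctions of Definition~\ref{def:main-model}; (b) left adjoints preserve coproducts, so $L$ and $J$ send the coproducts in $\VV$ and $\BB$ to those in $\CC$ and $\VV$; and (c) $F = LJ$ by definition.

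For the first claim $\lrb A \cong L\lrbv A$ I would walk through each clause of Figure~\ref{fig:type-interpretation}. The unit case gives $L\lrbv I = L I_\VV \cong I$ from strong monoidality. The bang case is essentially definitional, $L\lrbv{!A} = L(JG\lrb A) = (LJ)G\lrb A = FG\lrb A = !\lrb A$, after unfolding $F = LJ$. The coproduct and tensor cases, $L\lrbv{A+B} \cong L\lrbv A + L\lrbv B$ and $L\lrbv{A\otimes B} \cong L\lrbv A \otimes L\lrbv B$, follow from coproduct preservation and the strong monoidal structure of $L$ respectively, combined with the induction hypotheses on $A$ and $B$. The function-space clause $\lrb{A\multimap B} = L(\lrbv A \multimap \lrb B) = L\lrbv{A\multimap B}$ is immediate by definition.

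For the second claim $\lrbv P \cong J\lrbb P$ on non-linear types, I would argue analogously: the unit case uses $J 1 \cong I_\VV$ by strong monoidality of $J$; the bang case is once more definitional, since $\lrbv{!A} = JG\lrb A = J\lrbb{!A}$; and the coproduct and tensor cases use coproduct preservation and the strong monoidal structure of $J$, noting that $\lrbb{P\otimes R} = \lrbb P \times \lrbb R$ is the cartesian (hence monoidal) product in $\BB$. The corollary $\lrb P \cong L\lrbv P \cong L J \lrbb P = F \lrbb P$ then falls out by composing the two established isomorphisms with $F = LJ$.

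There is no real obstacle here: every clause is either definitional or a direct appeal to the preservation properties of left adjoints and strong monoidal functors. The only bookkeeping subtlety is running the two mutually recursive inductions in lockstep — at the bang case the $\VV$-interpretation and the $\BB$-interpretation are defined by direct reference to one another and to the $\CC$-interpretation — but because all three interpretations are defined by a single structural recursion, the induction hypotheses are available exactly where they are needed.
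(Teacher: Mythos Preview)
Your proposal is correct and follows exactly the approach the paper indicates: the paper states that ``these isomorphisms are also defined by induction on the structure of types, but we omit the details here,'' and your case analysis supplies precisely those omitted details, using the strong monoidality of $L$ and $J$ and preservation of coproducts by left adjoints. One minor remark: the two inductions are in fact independent (at the bang case both sides reduce to literal equalities $FG\lrb A = LJG\lrb A$ and $JG\lrb A = JG\lrb A$, not requiring any cross-reference), so the ``lockstep'' bookkeeping you flag is even simpler than you suggest.
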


These isomorphisms are also defined by induction on the structure of types, but we omit the details here (the construction is similar to the one in \cite{lnl-fpc,lnl-fpc-lmcs}).
Moreover, in order to avoid using excessive notation in the interpretation of terms, we make the following assumption.

\begin{assumption}
From now on, we suppress the natural isomorphisms related to the monoidal structure of all categories, the strong monoidal functors of the adjunction and the preservation of colimits.
With this in place, the isomorphisms of Proposition \ref{prop:types} become equalities and we will write them as such.
\end{assumption}

Of course, our results continue to hold even without this assumption, but we do this for brevity of the presentation (see \cite{lnl-fpc,lnl-fpc-lmcs} for more information on how to handle such isomorphisms).

The interpretation of a context $\Gamma = \{ x_1 : A_1, \ldots , x_n : A_n \}$ within $\CC$ is defined in the usual way as $\lrb \Gamma = \lrb{A_1} \otimes \cdots \otimes \lrb{A_n}$.
Similarly, we may define its interpretation in $\VV$. Every non-linear context $\Phi = \{ x_1 : P_1, \ldots , x_n : P_n \}$ also admits an interpretation within $\BB$ by
$\lrbb \Phi =\lrbb{P_1} \times \cdots \times \lrbb{P_n} $.
Then, just as in Proposition \ref{prop:types}, we have $\lrb \Gamma = L \lrbv \Gamma$ and for non-linear types contexts $\Phi$ we also have $\lrb \Phi = L \lrbv \Phi = F \lrbb \Phi.$

Before we may define the interpretation of terms, we have to explain how to construct morphisms for copying, deletion and promotion of non-linear primitives. We do this in the following way.

\begin{definition}\label{def:substructural-maps}
For every non-linear type or context $X$, we define discarding ($\diamond$), copying ($\triangle$) and promotion ($\Box$) morphisms in all three categories:
\begin{align*}
  \diamond_{X}^\BB  &\coloneqq \lrbb{X} \xrightarrow{1} 1  & \diamond_X^\VV &\coloneqq J \diamond_X^\BB  & \diamond_X^\CC &\coloneqq F \diamond_X^\BB \\
  \triangle_{X}^\BB &\coloneqq \lrbb{X} \xrightarrow{\langle \id, \id \rangle }  \lrbb{X} \times \lrbb{X} & \triangle_X^\VV &\coloneqq J \triangle_X^\BB & \triangle_X^\CC &\coloneqq F \triangle_X^\BB \\
  \Box_X^\BB        &\coloneqq \lrbb{X} \xrightarrow{\eta} GF \lrbb X = G \lrb X = \lrbb{!X} & \Box_X^\VV &\coloneqq J \Box_X^\BB & \Box_X^\CC &\coloneqq F \Box_X^\BB
\end{align*}
The substructural morphisms $\chi^\CC_X$ are the ones directly used for the interpretation of terms, so we shall simply write them as $\diamond_X : \lrb X \to I$ and $\triangle_X : \lrb X \to \lrb X \otimes \lrb X$ and $\Box_X : \lrb X \to \lrb{!X},$ omitting the superscript.
\end{definition}

\begin{proposition}
\label{prop:comonoids}
For every non-linear type or context $X$, the substructural maps for copying and discarding form cocommutative comonoids in their respective categories:
\begin{enumerate}
  \item The triple $(\lrbb X, \triangle_X^\BB, \diamond_X^\BB)$ is a cocommutative comonoid in $\BB$.
  \item The triple $(\lrbv X, \triangle_X^\VV, \diamond_X^\VV)$ is a cocommutative comonoid in $\VV$.
  \item The triple $(\lrb X, \triangle_X \diamond_X)$ is a cocommutative comonoid in $\CC$.
\end{enumerate}
Moreover, the comonoid homomorphisms with respect to the above structures are:
\begin{enumerate}
  \item Every morphism of $\BB$ (because $\BB$ is cartesian).
  \item The morphisms of $\VV$ in the image of $J$.
  \item The morphisms of $\CC$ in the image of $F$.
\end{enumerate}
\end{proposition}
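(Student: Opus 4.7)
The plan is to use the canonical cocommutative comonoid structure that every object carries in a cartesian category, and then transport it along the strong symmetric monoidal functors $J$ and $F = LJ$. First, I would observe that in the cartesian category $(\BB, \times, 1)$, the diagonal $\langle \id, \id \rangle$ together with the unique map to the terminal object endow every $\lrbb X$ with the structure of a cocommutative comonoid. This proves part (1) directly, and the corresponding homomorphism claim---that every morphism of $\BB$ is a comonoid homomorphism for this canonical structure---is a standard consequence of the universal property of products.

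For parts (2) and (3), I would invoke the general fact that any strong symmetric monoidal functor sends cocommutative comonoids to cocommutative comonoids: the comonoid axioms are diagrams involving the monoidal structure, and these transport along the coherence isomorphisms of a strong monoidal functor. Since $J \dashv K$ is a symmetric monoidal adjunction, the left adjoint $J$ is strong symmetric monoidal, so applying $J$ to $(\lrbb X, \triangle_X^\BB, \diamond_X^\BB)$ yields $(\lrbv X, \triangle_X^\VV, \diamond_X^\VV)$ as a cocommutative comonoid in $\VV$, proving (2). Repeating the argument with $L$ (strong symmetric monoidal because $L \dashv R$ is a symmetric monoidal adjunction), and hence with the composite $F = LJ$, gives (3). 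Under our standing assumption that suppresses the monoidal and adjunction coherence isomorphisms, the objects $J\lrbb X$ and $F\lrbb X$ coincide with $\lrbv X$ and $\lrb X$, respectively (via the identifications of Proposition \ref{prop:types}), so the resulting comonoid structures sit on the claimed objects.

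For the characterisation of comonoid homomorphisms, I would use the companion fact that strong symmetric monoidal functors carry comonoid homomorphisms to comonoid homomorphisms, since the defining commutative squares transport along the functor's coherence isomorphisms. Combined with (1), this shows that every morphism of $\VV$ in the image of $J$ is a comonoid homomorphism with respect to the structures in (2), and likewise every morphism of $\CC$ in the image of $F$ is a comonoid homomorphism with respect to the structures in (3). I do not anticipate any serious obstacle; the only bookkeeping concern is tracking the isomorphisms of Proposition \ref{prop:types} and the strong monoidal coherence, which our standing assumption allows us to treat as equalities.
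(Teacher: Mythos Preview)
Your argument is correct and is precisely the standard route: use the canonical comonoid structure that every object carries in a cartesian category, then transport it along the strong symmetric monoidal functors $J$ and $F = LJ$, noting that such functors preserve both cocommutative comonoids and comonoid homomorphisms. The paper states this proposition without proof, so there is nothing to compare against beyond observing that your approach is the expected one and that the only bookkeeping (the coherence isomorphisms and the identifications of Proposition~\ref{prop:types}) is indeed absorbed by the standing assumption, as you note.
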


So, we see that in any $\lnl$-model, we may define copy and discarding morphisms at every non-linear type. However, to interpret $\anl$, we have to able to construct discarding morphisms at all types (including linear ones). This is possible in a $\anl$-model, because
of the additional assumption that $I_\VV$ is a terminal object in $\VV$. Therefore, in an $\anl$-model, we simply define the discarding map to be the unique map $\diamond_A^\VV : \lrbv A \to I_\VV$, which then induces a discarding map
$\diamond_A \coloneqq L\diamond_A^\VV : \lrb A \to I$ in $\CC$. Note that the latter map can then discard any morphism in the image of $L$, and we will see that the interpretation of values satisfies this.

\begin{figure}[t]
\cstikz{alg-compactness1.tikz}
\caption{Definition of $\sigma_m$ and $\gamma_{FX}$. Given an object $FX$ and a morphism $m$ as above, $\sigma_m$ and $\gamma_{FX}$ are the unique maps making the above diagram commute, where $\Omega_{FX}$ is the initial (final) $FX \otimes !(-)$-(co)algebra.}
\label{fig:recursion}
\end{figure}

\begin{figure}
\begin{align*}
&\lrb{\Phi, x: A \vdash x : A} \coloneqq
\lrb{\Phi} \otimes \lrb{A} \xrightarrow{\diamond \otimes \id} I \otimes \lrb{A} = \lrb{A} \\
&\lrb{\Phi \vdash * : I} \coloneqq
\lrb{\Phi} \xrightarrow{\diamond} I = \lrb I \\
&\lrb{\Phi, \Gamma, \Sigma \vdash m;n : A} \coloneqq
\lrb \Phi \otimes \lrb{\Gamma} \otimes \lrb{\Sigma} 
\xrightarrow{\triangle  \otimes \id}
\lrb \Phi \otimes \lrb \Phi \otimes \lrb{\Gamma} \otimes \lrb{\Sigma}
\xrightarrow \cong \\
& \quad
\lrb \Phi \otimes \lrb{\Gamma} \otimes \lrb \Phi \otimes \lrb{\Sigma}
\xrightarrow{\lrb m \otimes \lrb n}
I \otimes \lrb A = \lrb A \\
&\lrb{\Gamma \vdash \lleft_{A,B} m : A+B} \coloneqq
\lrb{\Gamma}\xrightarrow{\lrb{m}} \lrb{A} \xrightarrow{\mathrm{left}} \lrb{A}+\lrb{B}=\lrb{A+B} \\
&\lrb{\Gamma \vdash \rright_{A,B} m : A+B} \coloneqq
\lrb{\Gamma}\xrightarrow{\lrb{m}} \lrb{B} \xrightarrow{\mathrm{right}} \lrb{A}+\lrb{B}=\lrb{A+B} \\
&\lrb{\Phi, \Gamma, \Sigma \vdash \ccase\ m\ \texttt{of}\ \{\lleft\ x \to n\ |\ \rright\ y \to p\} : C} \coloneqq
\lrb \Phi \otimes \lrb{\Gamma} \otimes \lrb{\Sigma} 
\xrightarrow{\triangle  \otimes \id} \\
& \quad
\lrb \Phi \otimes \lrb \Phi \otimes \lrb{\Gamma} \otimes \lrb{\Sigma}
\xrightarrow \cong
\lrb \Phi \otimes \lrb{\Sigma} \otimes \lrb \Phi \otimes \lrb{\Gamma}
\xrightarrow{\id \otimes \lrb m}
\lrb \Phi \otimes \lrb{\Sigma} \otimes \lrb{A+B}
\xrightarrow \cong \\
& \quad \left( \lrb \Phi \otimes \lrb{\Sigma} \otimes \lrb A \right)
+
\left( \lrb \Phi \otimes \lrb{\Sigma} \otimes \lrb B \right)
\xrightarrow{\left[ \lrb n, \lrb p \right]}
\lrb C \\
&\lrb{\Phi, \Gamma, \Sigma \vdash \langle m,n \rangle: A \otimes B} \coloneqq
      \lrb{\Phi}\otimes\lrb{\Gamma}\otimes\lrb{\Sigma}
        \xrightarrow{\triangle  \otimes \id}
      \lrb{\Phi} \otimes \lrb{\Phi}\otimes\lrb{\Gamma}\otimes\lrb{\Sigma} \xrightarrow \cong\\
     &\quad \lrb{\Phi}\otimes\lrb{\Gamma}\otimes\lrb{\Phi}\otimes\lrb{\Sigma}
      \xrightarrow{\lrb m \otimes \lrb n}
        \lrb A \otimes \lrb B = \lrb{A \otimes B} \\
&\lrb{\Phi, \Gamma, \Sigma \vdash\llet\ \langle x,y \rangle=m\ \text{in}\ n:C} := \lrb{\Phi}\otimes\lrb{\Gamma}\otimes\lrb{\Sigma}
  			\xrightarrow{\triangle \otimes\id}\lrb{\Phi}\otimes\lrb{\Phi}\otimes \lrb{\Gamma}\otimes\lrb{\Sigma}\xrightarrow \cong \\
        &\quad \lrb{\Phi}\otimes\lrb{\Gamma}\otimes\lrb{\Phi}\otimes\lrb{\Sigma}
  			\xrightarrow{\lrb{m}\otimes \id}
  			\lrb{A\otimes B}\otimes \lrb{\Phi}\otimes\lrb{\Sigma} \xrightarrow \cong
       	\lrb{\Phi}\otimes\lrb{\Sigma}\otimes\lrb{A}\otimes\lrb{B}
  			\xrightarrow{\lrb{n}}
  			\lrb{C}\\
&\lrb{\Gamma \vdash \lambda x^A . m : A \multimap B} \coloneqq
\lrb{\Gamma} = L \lrbv \Gamma \xrightarrow{L \textbf{curry}({\lrb m})} L (\lrbv A \multimap \lrb B) = \lrb{A \multimap B} \\
&\lrb{\Phi, \Gamma, \Sigma \vdash mn : B} \coloneqq
\lrb{\Phi} \otimes \lrb{\Gamma} \otimes \lrb{\Sigma}
\xrightarrow{\triangle \otimes\id}
\lrb{\Phi}\otimes\lrb{\Phi}\otimes
\lrb{\Gamma}\otimes\lrb{\Sigma}
\xrightarrow \cong \\
& \quad \lrb{\Phi}\otimes\lrb{\Gamma}\otimes\lrb{\Phi}\otimes\lrb{\Sigma}
\xrightarrow{\lrb{m}\otimes \lrb{n}}
\left( \lrb{A \multimap B} \right) \otimes \lrb{A} =
L \left( (\lrbv A \multimap \lrb B)  \otimes_\VV \lrbv{A} \right)\xrightarrow{\eval}\lrb{B} \\
&\lrb{\Phi \vdash \lift\ m :\ !A} \coloneqq \lrb \Phi \xrightarrow{\Box}\ !\lrb \Phi \xrightarrow{! \lrb m}\ ! \lrb A = \lrb{!A} \\
&\lrb{\Gamma \vdash \force\ m :A} \coloneqq \lrb \Gamma \xrightarrow{\lrb m}\ ! \lrb A \xrightarrow{\epsilon} \lrb A \\
&\lrb{\Phi \vdash \rec\ x^{!A}. m : A} \coloneqq \lrb \Phi \xrightarrow{\gamma_{\lrb \Phi}} \Omega_{\lrb \Phi} \xrightarrow{\sigma_{\lrb m}} \lrb A
\end{align*}
\caption{Interpretation of $\lnl$-terms.}
\label{fig:term-interpretation}
\end{figure}

We many now define the interpretation of terms of $\lnl$. As usual, a
well-formed term $\Gamma \vdash m : A$ is interpreted as a morphism $\lrb { \Gamma \vdash m : A} :
\lrb \Gamma \to \lrb A$ in $\CC$ which is defined by induction on the
derivation of $\Gamma \vdash m :A$ in Figure \ref{fig:term-interpretation}.
We will also often abbreviate this by simply writing $\lrb m$, instead of $\lrb { \Gamma \vdash m : A}$.
The interpretation of recursion makes use of the auxiliary definition in Figure \ref{fig:recursion} which is well-defined due to the assumption in Definition \ref{def:main-model}.6.
In the interpretation of $\ccase$ terms, we use the fact that the tensor product distributes over coproducts, provided they are both in the image of $L$, which follows from the assumption in Definition \ref{def:main-model}.5.

The interpretation of $\anl$ terms within a $\anl$-model is done in the same way as in Figure \ref{fig:term-interpretation}, but where we update the three rules that are different among the calculi to also handle the more general contexts, as follows:
\begin{align*}
&\lrb{\Gamma, x: A \vdash x : A} \coloneqq
\lrb{\Gamma} \otimes \lrb{A} \xrightarrow{\diamond \otimes \id} I \otimes \lrb{A} = \lrb{A} \\
&\lrb{\Gamma \vdash * : I} \coloneqq
\lrb{\Gamma} \xrightarrow{\diamond} I = \lrb I \\
&\lrb{\Phi, \Gamma \vdash \lift\ m :\ !A} \coloneqq \lrb \Phi \otimes \lrb \Gamma \xrightarrow{\id \otimes \diamond} \lrb \Phi \otimes I = \lrb \Phi \xrightarrow{\Box}\ !\lrb \Phi \xrightarrow{! \lrb m}\ ! \lrb A = \lrb{!A}
\end{align*}

In order to show that our models are sound, we have to show that the interpretations of (non-linear) values interact nicely with the substructural morphisms we have defined (Proposition \ref{prop:substructural}).
This is done by showing that non-linear values admit an interpretation in $\BB$ and that all values admit an interpretation in $\VV$, such that the interpretation of values in $\CC$ are in the image of the respective left adjoints.

\begin{lemma}
For every non-linear value $\Phi \vdash v : P$, we define an interpretation $ \lrbb{ \Phi \vdash v : P } : \lrbb{\Phi} \to \lrbb P $ within $\BB$ by induction on the derivation of $\Phi \vdash v : P$ as follows:
\begin{align*}
\lrbb{\Phi, x : P \vdash x : P} &\coloneqq \lrbb \Phi \times \lrbb P \xrightarrow{\pi_2} \lrbb P \\
\lrbb{\Phi \vdash * : I } &\coloneqq \lrbb \Phi \xrightarrow{1} 1 = \lrbb 1 \\
\lrbb{\Phi \vdash \lleft_{P,R} v : P +R } &\coloneqq \lrbb \Phi \xrightarrow{\lrbb v} \lrbb P \xrightarrow{\mathrm{inl}} \lrbb P \amalg \lrbb R = \lrbb{P + R} \\
\lrbb{\Phi \vdash \rright_{P,R} v : P +R } &\coloneqq \lrbb \Phi \xrightarrow{\lrbb v} \lrbb R \xrightarrow{\mathrm{inr}} \lrbb P \amalg \lrbb R = \lrbb{P + R} \\
\lrbb{\Phi \vdash \left\langle v, w \right\rangle : P \otimes R } &\coloneqq \lrbb \Phi \xrightarrow{ \left\langle \lrbb v, \lrbb w \right\rangle } \lrbb P \times \lrbb R = \lrbb{P \otimes R} \\
\lrbb{\Phi \vdash \lift\ m :\ !A } &\coloneqq \lrbb \Phi \xrightarrow{\eta} GF\lrbb \Phi = G \lrb \Phi \xrightarrow{G \lrb m} G \lrb A = \lrbb{!A}
\end{align*}
Then $\lrb v = F \lrbb v.$
\end{lemma}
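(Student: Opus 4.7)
The proof will proceed by induction on the derivation of $\Phi \vdash v : P$, and the basic strategy is to exploit the fact that the left adjoint $F = LJ : \BB \to \CC$ is strong monoidal (as the composite of two strong monoidal functors coming from symmetric monoidal adjunctions) and preserves finite coproducts (as a left adjoint). Together with the identifications of Proposition~\ref{prop:types} (which we treat as equalities by the standing assumption), this gives $F(\lrbb \Phi \times \lrbb P) = \lrb \Phi \otimes \lrb P$, $F(\lrbb P \amalg \lrbb R) = \lrb P + \lrb R$, and $F(\lrbb{!A}) = FG \lrb A = {!\lrb A}$, so that applying $F$ to the $\BB$-interpretation lands in precisely the object that the $\CC$-interpretation of $v$ has as its codomain.

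The routine cases are the constant, the two injections, and the lift. For $*$, both sides reduce to the discard morphism, and equality is immediate from the definition $\diamond^\CC_\Phi = F\diamond^\BB_\Phi$. For $\lleft_{P,R} v$ and $\rright_{P,R} v$, the induction hypothesis gives $\lrb v = F \lrbb v$, and the claim reduces to the compatibility of $F$ with the coproduct injections, which follows from $F$ being a left adjoint together with the suppression of the canonical isomorphism $F\lrbb P + F\lrbb R \cong F(\lrbb P \amalg \lrbb R)$. For $\lift\ m$, I unfold the $\CC$-interpretation as $\Box^\CC_\Phi; {!\lrb m} = F\eta_{\lrbb \Phi}; FG\lrb m$, then use functoriality of $F$ to collect this into $F(\eta_{\lrbb \Phi}; G\lrb m)$, which is exactly $F\lrbb{\lift\ m}$.

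The two cases that really use the structural setup are the variable and the pair. For the variable rule, I need to compare $\lrb{\Phi, x : P \vdash x : P} = (\diamond^\CC_\Phi \otimes \id_{\lrb P})$ post-composed with the left unitor against $F\pi_2 : F(\lrbb \Phi \times \lrbb P) \to F\lrbb P$. The key observation is that in the cartesian base $\BB$ one has $\pi_2 = (\diamond^\BB_\Phi \times \id_{\lrbb P}); \lambda_{\lrbb P}$, so that strong monoidality of $F$ (which sends $\times$ coherently to $\otimes$) together with $\diamond^\CC = F\diamond^\BB$ delivers the equality. For the pair $\langle v, w\rangle$, the induction hypothesis gives $\lrb v = F\lrbb v$ and $\lrb w = F\lrbb w$; expressing the $\BB$-pair as $\langle \lrbb v, \lrbb w\rangle = \triangle^\BB_\Phi; (\lrbb v \times \lrbb w)$ and applying $F$, the required equality with $\triangle^\CC_\Phi; (\lrb v \otimes \lrb w)$ again follows from $\triangle^\CC = F\triangle^\BB$ and strong monoidality of $F$. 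The main bookkeeping challenge, and really the only potentially delicate point, is tracking that the monoidal coherence isomorphisms of $F$ match the cartesian coherence of $\BB$ and the suppressed $\lrb P = F\lrbb P$ identifications from Proposition~\ref{prop:types}; under the standing assumption these collapse to equalities and the verification becomes essentially mechanical.
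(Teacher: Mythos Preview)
Your proposal is correct and is exactly the intended argument. The paper does not spell out a proof of this lemma, but the induction you describe---reducing each case to strong monoidality of $F$, preservation of coproducts by the left adjoint, and the defining equalities $\diamond^\CC = F\diamond^\BB$, $\triangle^\CC = F\triangle^\BB$, $\Box^\CC = F\Box^\BB$ together with the identifications of Proposition~\ref{prop:types}---is precisely what is needed, and your handling of the variable and pair cases via $\pi_2 = (\diamond^\BB \times \id);\lambda$ and $\langle f,g\rangle = \triangle^\BB;(f\times g)$ is the right way to mediate between the cartesian structure of $\BB$ and the monoidal structure of $\CC$.
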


Using the same idea, we can define for every value $v$ an interpretation $\lrbv v$ in $\VV$.

\begin{lemma}
For every value $\Gamma \vdash v : A$ of both $\lnl$ and $\anl$ it is possible to define an interpretation $\lrbv{ \Gamma \vdash v : A} : \lrbv \Gamma \to \lrbv A$ within $\VV$, such that $\lrb v = L \lrbv v$ (details ommitted for lack of space).
\end{lemma}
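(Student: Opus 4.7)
The plan is to proceed by induction on the derivation of $\Gamma \vdash v : A$, defining $\lrbv{v}$ clause by clause in parallel with Figure \ref{fig:term-interpretation} but working inside $\VV$ rather than $\CC$, and at each step verifying the coherence equation $\lrb{v} = L\lrbv{v}$. The main tools are strong monoidality of $L$ (so tensors and the unit are preserved on the nose, modulo the assumption following Proposition \ref{prop:types}), the identity $LJ = F$ (so that $L$ transports the substructural maps of $\VV$ to those of $\CC$, in particular $L\diamond^{\VV} = \diamond^{\CC}$ and $L\Box^{\VV} = \Box^{\CC}$ by Definition \ref{def:substructural-maps}), preservation of coproducts by the left adjoint $L$, and the currying adjunction from Definition \ref{def:main-model}(5).

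For the straightforward cases --- variable, $*$, $\lleft v$, $\rright v$, and $\langle v, w \rangle$ --- I define $\lrbv{v}$ using the corresponding $\VV$-structure ($\diamond^{\VV}$ for the discarded context, coproduct injections in $\VV$, and $\otimes_{\VV}$) and read off $L\lrbv{v} = \lrb{v}$ from strong monoidality of $L$ together with the inductive hypothesis on subvalues. The two genuinely new cases involve the interpretation of an arbitrary subterm $m$. For $\lambda x^A.m$, I take $\lrbv{\lambda x^A.m} : \lrbv{\Gamma} \to \lrbv{A} \multimap \lrb{B} = \lrbv{A \multimap B}$ to be the transpose of $\lrb{m} : L(\lrbv{\Gamma} \otimes_{\VV} \lrbv{A}) \to \lrb{B}$ under the currying adjunction; applying $L$ to this transpose is, by the definition of $\lrb{\lambda x^A.m}$ in Figure \ref{fig:term-interpretation}, exactly $\lrb{\lambda x^A.m}$. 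For $\lift m$ in $\lnl$, noting that $\lrbv{\Phi} = J\lrbb{\Phi}$, I set $\lrbv{\lift m} := JG\lrb{m} \circ J\eta_{\lrbb{\Phi}}$; applying $L$ and using $LJ = F$ yields $FG\lrb{m} \circ F\eta = !\lrb{m} \circ \Box = \lrb{\lift m}$. For the $\anl$ variant $\Phi, \Gamma \vdash \lift m : !A$, I first contract the extra context by $\diamond^{\VV}_{\lrbv{\Gamma}}$, which exists precisely because $I_{\VV}$ is terminal in any $\anl$-model, and then proceed as in $\lnl$; the compatibility $L\diamond^{\VV} = \diamond^{\CC}$ again closes the verification. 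The updated variable and $*$ rules of $\anl$ are handled by the same discarding argument.

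The main obstacle is bookkeeping rather than conceptual: one must insert the strong monoidal coherence isomorphisms of $L$ (suppressed by the assumption following Proposition \ref{prop:types}) and the natural isomorphisms $LJ \cong F$ in precisely the right places so that each equation $\lrb{v} = L\lrbv{v}$ holds on the nose rather than merely up to canonical isomorphism. The only place where genuinely new structure is invoked is the currying adjunction in the $\lambda$-case, and once the transpose is identified that case is immediate. Everything else reduces to chasing strong monoidal diagrams, so no further assumptions beyond those of a $\lnl$- or $\anl$-model are needed.
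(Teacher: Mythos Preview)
Your proposal is correct and follows essentially the same approach the paper indicates: the paper omits the details here but explicitly says the construction is ``using the same idea'' as the preceding lemma for non-linear values in $\BB$, i.e., a clause-by-clause inductive definition transported along the appropriate left adjoint, which is exactly what you carry out. One minor omission: in the pair case you also need $\triangle^{\VV}$ for the shared non-linear context $\Phi$ (not just $\otimes_{\VV}$), but since $L\triangle^{\VV} = LJ\triangle^{\BB} = F\triangle^{\BB} = \triangle^{\CC}$ this goes through by the same reasoning you already invoke for $\diamond$ and $\Box$.
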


\begin{proposition}
\label{prop:substructural}
In any $\lnl$-model, for every non-linear value $\Phi \vdash v : P$, we have:
\begin{align*}
  \diamond_{P} \circ \lrb v &= \diamond_{\Phi} &   \triangle_{P} \circ \lrb v &= (\lrb v \otimes \lrb v) \circ \triangle_{\Phi} &   \Box_{P} \circ \lrb v &=\ ! \lrb v \circ \Box_{\Phi}.
\end{align*}
Moreover, in any $\anl$-model, for every value $\Gamma \vdash v : A,$ we also have that
  $\diamond_{A} \circ \lrb v = \diamond_{\Gamma} .$
\end{proposition}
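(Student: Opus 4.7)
The plan is to leverage the two lemmas preceding Proposition \ref{prop:substructural}, which give $\lrb{v} = F\lrbb{v}$ for a non-linear value and $\lrb{v} = L\lrbv{v}$ for a general value, so that the three equations for non-linear values can be transported from $\BB$ via $F$, and the extra $\anl$-equation can be transported from $\VV$ via $L$. Since each of $\diamond_{X}, \triangle_{X}, \Box_{X}$ was defined (in Definition \ref{def:substructural-maps}) as the image under $F$ of a morphism in $\BB$, functoriality of $F$ will do most of the work.

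First I would handle the non-linear value case. Writing $\lrb{v} = F\lrbb{v}$ and expanding the definitions gives, for the discarding equation,
\begin{equation*}
\diamond_{P}\circ \lrb{v} \;=\; F\diamond_{P}^{\BB}\circ F\lrbb{v} \;=\; F\bigl(\diamond_{P}^{\BB}\circ \lrbb{v}\bigr) \;=\; F\diamond_{\Phi}^{\BB} \;=\; \diamond_{\Phi},
\end{equation*}
where the second equality from the right is the unique-map-to-terminal property in the cartesian category $\BB$. The copying equation is analogous: since $\BB$ is cartesian, every morphism is a comonoid homomorphism for the cartesian comonoid $(\lrbb{X},\triangle_{X}^{\BB},\diamond_{X}^{\BB})$, so $\triangle_{P}^{\BB}\circ\lrbb{v} = (\lrbb{v}\times\lrbb{v})\circ\triangle_{\Phi}^{\BB}$. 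Applying $F$ and using that $F$ is strong monoidal (so $F(\lrbb{v}\times\lrbb{v}) = F\lrbb{v}\otimes F\lrbb{v} = \lrb{v}\otimes\lrb{v}$) gives the required identity. For promotion, $\Box_{X} = F\eta_{\lrbb{X}}$, so naturality of the unit $\eta : \Id \naturalto GF$ gives $\eta_{\lrbb{P}}\circ\lrbb{v} = GF\lrbb{v}\circ\eta_{\lrbb{\Phi}}$; applying $F$ and rewriting $FGF\lrbb{v} = {!}F\lrbb{v} = {!}\lrb{v}$ yields $\Box_{P}\circ\lrb{v} = {!}\lrb{v}\circ \Box_{\Phi}$.

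For the $\anl$ clause, the argument is even shorter. In an $\anl$-model, $I_{\VV}$ is terminal, so every hom $\VV(X,I_{\VV})$ is a singleton. Hence $\diamond_{A}^{\VV}\circ \lrbv{v} = \diamond_{\Gamma}^{\VV}$ by uniqueness, and applying $L$ together with $\lrb{v} = L\lrbv{v}$ gives $\diamond_{A}\circ \lrb{v} = L(\diamond_{A}^{\VV}\circ \lrbv{v}) = L\diamond_{\Gamma}^{\VV} = \diamond_{\Gamma}$.

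I do not expect a substantial obstacle: the main content was already packed into the preceding two lemmas (that non-linear values factor through $F$ and general values factor through $L$), together with the definition of the substructural maps as images of cartesian/unit data. The only point requiring any care is the book-keeping of the strong-monoidal coherence isomorphisms that identify $F(X\times Y)$ with $FX\otimes FY$ in the copying step, but this is precisely what the assumption following Proposition \ref{prop:types} allows us to suppress.
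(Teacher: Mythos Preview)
Your proposal is correct and is precisely the argument the paper sets up: the two preceding lemmas supply $\lrb v = F\lrbb v$ and $\lrb v = L\lrbv v$, Definition~\ref{def:substructural-maps} writes each substructural map as an image under $F$ (or $L$), and then terminality in $\BB$, the cartesian comonoid law, naturality of $\eta$, and terminality of $I_\VV$ give the four equations exactly as you describe. The paper does not spell out the proof, but your write-up is the intended one, including the observation that the strong-monoidal coherences for $F$ in the copying step are absorbed by the standing assumption after Proposition~\ref{prop:types}.
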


With this place, soundness may now be proved in a straightforward way.

\begin{theorem}[Soundness]
\label{thm:soundness}
If $\Gamma \vdash m : A$ and $m \Downarrow v,$ then $\lrb m = \lrb v$ (in both $\lnl$ and $\anl$).
\end{theorem}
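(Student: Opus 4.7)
The plan is to proceed by induction on the derivation of the big-step judgement $m \Downarrow v$. The two preceding lemmas, which provide interpretations $\lrbb{v}$ and $\lrbv{v}$ of values in $\BB$ and $\VV$ such that $\lrb v = F\lrbb v$ for non-linear values and $\lrb v = L \lrbv v$ for arbitrary values, together with Proposition~\ref{prop:substructural}, supply the key structural facts: the interpretations of values commute with the substructural morphisms $\diamond$, $\triangle$ and $\Box$. This is exactly what is needed for duplications and discardings of a shared non-linear context $\Phi$ to slide through the interpretation of values without obstruction, so that each inductive step reduces to a diagram chase.

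Before carrying out the induction I would prove a substitution lemma: if $\Phi, \Sigma, x:A \vdash n : C$ is derivable and $\Phi \vdash v : A$ is a value (with the natural relaxation to $\Gamma$ rather than $\Phi$ in $\anl$), then $\lrb{n[v/x]}$ equals $\lrb n$ pre-composed at the $x$-slot with $\lrb v$, modulated by a copy of the non-linear context $\lrb \Phi$ via $\triangle$. The proof is by induction on the derivation of $n$, using $\lrb v = F \lrbb v$ (resp.\ $L\lrbv v$) from the lemmas above, the cocommutative comonoid laws of Proposition~\ref{prop:comonoids} to reshuffle the copies of $\lrb \Phi$, and Proposition~\ref{prop:substructural} to move $\triangle$, $\diamond$ and $\Box$ across $\lrb v$ whenever the lemma is instantiated at non-linear types.

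With the substitution lemma in hand, the main induction treats each rule of Figure~\ref{fig:operational}. The axiom cases ($x \Downarrow x$, $* \Downarrow *$, $\lambda x.m \Downarrow \lambda x.m$, $\lift\ m \Downarrow \lift\ m$) are immediate from the definition of $\lrb{-}$. The congruence cases for $\lleft$, $\rright$ and $\langle -,- \rangle$ follow directly from the induction hypothesis and functoriality. For sequencing $m;n$, for case, for let, for application and for force the pattern is uniform: apply the induction hypothesis to the premises, invoke the substitution lemma to rewrite the interpretation of the substituted subterm, and collapse the resulting diagram using the adjunction counit $\epsilon$ (for force) or the universal property of the coproduct/tensor (for the eliminators), exactly as encoded in Figure~\ref{fig:term-interpretation}. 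The three modified rules of $\anl$ are handled identically, using the additional discarding map $\diamond_A$ available on every type in an $\anl$-model.

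The main obstacle, and the only case that requires more than routine calculation, is $\rec\ z^{!A}. m$. Its interpretation $\sigma_{\lrb m} \circ \gamma_{\lrb \Phi}$ is defined by the universal property of the initial (and final) $F\lrbb \Phi \otimes\, !(-)$-algebra from Definition~\ref{def:main-model}.6 via the diagram of Figure~\ref{fig:recursion}, whereas the operational rule unfolds $\rec\ z^{!A}. m$ by substituting $\lift\ \rec\ z^{!A}. m$ for $z$ in $m$. I would first establish the denotational fixpoint identity $\lrb{m[\lift\ \rec\ z^{!A}. m / z]} = \sigma_{\lrb m} \circ \gamma_{\lrb \Phi}$: the substitution lemma rewrites the left-hand side as $\lrb m$ precomposed with the interpretation of the value $\lift\ \rec\ z^{!A}. m$, and the two commuting triangles defining $\sigma_{\lrb m}$ and $\gamma_{\lrb \Phi}$, together with the action of the promotion morphism $\Box$, identify this with the right-hand side. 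Applying the induction hypothesis to the premise $m[\lift\ \rec\ z^{!A}. m / z] \Downarrow v$ then closes the case. Note that order-enrichment plays no role in this proof; it is reserved for the adequacy argument to follow.
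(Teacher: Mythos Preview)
Your proposal is correct and matches the paper's intended approach: the paper does not spell out a proof but simply remarks, after establishing the value-interpretation lemmas and Proposition~\ref{prop:substructural}, that ``soundness may now be proved in a straightforward way''. Your plan---a substitution lemma followed by induction on the big-step derivation, with the recursion case handled via the fixpoint identity coming from the initial/final algebra diagram of Figure~\ref{fig:recursion}---is exactly the standard argument that this remark is pointing to, and it uses precisely the ingredients the paper has put in place.
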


The above theorem shows that $\lnl$ $(\anl)$ can be soundly interpreted in any $\lnl$-model ($\anl$-model). To prove computational adequacy, we need some additional assumptions (as Example \ref{ex:degenerate} demonstrates).

\begin{theorem}[Adequacy]
\label{thm:adequacy}
Let $\cdot \vdash p : I$ be a closed program of unit type in $\lnl$ $(\anl)$. Then in any computationally adequate $\lnl$-model ($\anl$-model):
\[ \lrb p \neq \perp \text{ iff } p \Downarrow. \]
\end{theorem}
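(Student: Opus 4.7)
The easy direction is immediate from Theorem~\ref{thm:soundness}: the only value of type $I$ is $*$, so $p \Downarrow$ forces $p \Downarrow *$, whence $\lrb p = \lrb{\emptyset \vdash * : I} = \id_I$, and this differs from $\perp_{I,I}$ by the adequacy assumption in Definition~\ref{def:adequate-model}. The converse is where the work lies, and my plan is to handle it via a Plotkin-style formal approximation relation between denotations and closed terms.

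For each type $A$ I would define a relation $\triangleleft_A \subseteq \CC(I, \lrb A) \times \{\,\text{closed terms of type } A\,\}$ by induction on $A$. At the unit, $f \triangleleft_I p$ iff $f \neq \perp$ implies $p \Downarrow$. At sum and tensor types, the relation is defined by decomposition along the coproduct injections and the tensor structure, insisting that subcomponents of related values are themselves related. At $A \multimap B$, I transpose $f : I \to L(\lrbv A \multimap \lrb B)$ along the currying adjunction to obtain $\tilde f : \lrb A \to \lrb B$, and set $f \triangleleft_{A \multimap B} p$ iff $\tilde f \circ g \triangleleft_B p\,n$ whenever $g \triangleleft_A n$. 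At $!A$, I ask that whenever $p \Downarrow \lift n$, the composite $\epsilon_{\lrb A} \circ f$ is $\triangleleft_A$-related to $n$. A key technical step will be verifying that each $\triangleleft_A$ is admissible, i.e.\ contains $\perp$ in its first component and is closed under suprema of $\omega$-chains; this should follow from the $\cpobs$-enrichment of $\CC$ and the $\cpo$-enriched functoriality of $L$, $R$, $\mathrm{eval}$, and the monoidal/coproduct structure assumed in Definition~\ref{def:order-model}.

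Next I would extend $\triangleleft$ to open terms by quantifying over substitutions of related closed values for variables in the context, and then prove the fundamental lemma: every well-typed term is related to its denotation. Most typing rules are handled directly from the definition of $\triangleleft$ and the operational rules; for $\anl$, the additional discards are absorbed by Proposition~\ref{prop:substructural}. The essential case is $\rec z^{!A}. m$: here I rely on the fact that, in the $\cpobs$-enriched setting, the parameterised initial algebra $\Omega_{\lrb \Phi}$ from Definition~\ref{def:main-model}.6 arises as an $\omega$-colimit of $\perp$-approximations, so $\sigma_{\lrb m}\circ \gamma_{\lrb \Phi}$ is the supremum of a chain whose $n$-th term coincides with the denotation of the $n$-fold syntactic unfolding of $m$ under $z$ (started from a fixed divergent closed term of type $A$). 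Each such unfolding is handled by the induction hypothesis, and admissibility of $\triangleleft_A$ lifts the conclusion to the supremum. Applied to the closed program $p$, the fundamental lemma gives $\lrb p \triangleleft_I p$, whence $\lrb p \neq \perp$ forces $p \Downarrow$. I expect the main obstacle to be precisely this $\omega$-supremum characterisation of $\sigma_{\lrb m}\circ \gamma_{\lrb \Phi}$: the algebraic-compactness axiom is phrased purely abstractly, and bridging it to the operational iterated unfoldings requires a limit--colimit coincidence argument in the $\cpobs$-enriched setting, without which admissibility cannot close the recursion case.
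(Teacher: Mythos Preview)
Your proposal is correct and is precisely the kind of ``standard proof technique for adequacy'' that the paper defers to in lieu of a proof: the paper's own argument is a one-line citation to \cite{quant-semantics,qpl-fossacs}, and the Plotkin-style formal-approximation/logical-relations method you outline is exactly what those references implement. Your identification of the $\omega$-supremum characterisation of $\sigma_{\lrb m}\circ\gamma_{\lrb\Phi}$ as the crux of the recursion case is also on target; in the order-enriched setting of Definition~\ref{def:order-model} this is where the limit--colimit coincidence is invoked, and it is the one place where the abstract algebraic-compactness axiom must be reconciled with the concrete $\cpobs$-enrichment.
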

\begin{proof}
This may be established using standard proof techniques for adequacy, e.g. \cite{quant-semantics,qpl-fossacs}.
\end{proof}

\section{Lambda Abstractions, Monoidal Closure and Adequacy}
\label{sec:degenerate-semantics}

One of the stated goals of the present paper is to study how lambda
abstractions may be interpreted for substructural lambda calculi and the
effects this has on computational adequacy.  We have shown that if one uses
currying through our category $\VV$, then both $\lnl$ and $\anl$ can be
interpreted in a sound and adequate way in \secref{sec:semantics}. However, for
linear lambda calculi, ones often sees lambda abstractions interpreted
using the monoidal closed structure of the computational category. In this
section, we will assume that our category $\CC$ is monoidal closed and update
the semantics to interpret lambda abstractions through this structure and we then
show that this does not cause problems for $\lnl$, but it does cause problems for
$\anl$.

\begin{assumption}
\label{assume:last}
Throughout the remainder of the section, we assume that the category $\CC$ of a $\lnl$-model is a symmetric monoidal closed category $(\CC, \otimes, \multimap, I)$. The functor $\multimap : \CC^\op \times \CC \to \CC$ now refers to the functor induced by the adjunction $(- \otimes A) \dashv (A \multimap -)$ of
the symmetric monoidal closed structure.
\end{assumption}

As a special case of Proposition \ref{prop:compare-models}, by taking $\VV = \CC$ and $L = \Id = R$, we get a sound model of $\lnl$, where lambda abstractions are interpreted via the monoidal closed structure of $\CC$.
Under the additional assumptions of Definition \ref{def:adequate-model}, we also get computational adequacy, and it is not too difficult to find computationally adequate concrete models (Examples \ref{ex:adequate1}, \ref{ex:adequate2}, \ref{ex:adequate3}).
Therefore, we see that interpreting lambda abstractions via the monoidal closed structure of the computational category is not a problem for \emph{linear} lambda calculi.

Next, let us consider the situation for $\anl$. We will first explain how to interpret $\anl$ using the newly assumed structure.
The interpretation of types is updated by setting $\lrb{A \multimap B} = \lrb A \multimap \lrb B.$ The interpretation of lambda abstractions and application are updated as follows:
\begin{align*}
&\lrb{\Gamma \vdash \lambda x^A . m : A \multimap B} \coloneqq
\lrb{\Gamma} \xrightarrow{\textbf{curry}({\lrb m})} (\lrb A \multimap \lrb B) = \lrb{A \multimap B} \\
&\lrb{\Phi, \Gamma, \Sigma \vdash mn : B} \coloneqq
\lrb{\Phi} \otimes \lrb{\Gamma} \otimes \lrb{\Sigma}
\xrightarrow{\triangle \otimes\id}
\lrb{\Phi}\otimes\lrb{\Phi}\otimes
\lrb{\Gamma}\otimes\lrb{\Sigma}
\xrightarrow \cong \\
& \quad \lrb{\Phi}\otimes\lrb{\Gamma}\otimes\lrb{\Phi}\otimes\lrb{\Sigma}
\xrightarrow{\lrb{m}\otimes \lrb{n}}
\left( \lrb{A} \multimap \lrb{B} \right) \otimes \lrb{A}\xrightarrow{\eval}\lrb{B}
\end{align*}
We may now show the interpretation is provably inadequate and also completely degenerate.

\begin{proposition}
\label{prop:final}
Assume we are given a sound $\anl$-model under Assumption \ref{assume:last}. Then $\CC \simeq \mathbf 1$ and so the $\anl$-model is not computationally adequate (because every homset of $\CC$ has exactly one morphism).
\end{proposition}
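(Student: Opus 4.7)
The plan is to exploit the tension between the affine structure of $\anl$ and the use of $\multimap_\CC$ for lambda abstractions. In an $\anl$-model, every type-object $\lrb A$ carries a canonical discard map $\diamond_A : \lrb A \to I$ derived from the terminality of $I_\VV$ in $\VV$. Under the new interpretation, closed $\lambda$-values of type $A \multimap B$ are interpreted as $\textbf{curry}_\CC(\lrb m) : I \to \lrb A \multimap_\CC \lrb B$; for soundness (Theorem~\ref{thm:soundness}) to continue to hold, the value-interpretation lemma and the substructural property for values (Proposition~\ref{prop:substructural}) must hold under the updated interpretation.

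First I would unpack what the value lemma demands for the specific closed $\lambda$-value $\vdash \lambda x^A . * : A \multimap I$: its interpretation $\textbf{curry}_\CC(\diamond_A) : I \to \lrb A \multimap_\CC I$ must factor through $L$, i.e.\ equal $L(g)$ for some $g : I_\VV \to \lrbv{A \multimap I}$ in $\VV$. Because $I_\VV$ is terminal in $\VV$, the natural candidate $\lrbv A \multimap_\VV I$ is isomorphic to $I_\VV$, so $L$ sends it to $I$. Pushing this through the coherence requirement that the two interpretations of $\lrb{A \multimap I}$ agree up to natural isomorphism, one obtains $\lrb A \multimap_\CC I \cong I$ in $\CC$ for every type $A$.

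Next, via the SMCC adjunction one has $\CC(B \otimes A, I) \cong \CC(B, A \multimap_\CC I) \cong \CC(B, I)$, so tensoring with $A$ leaves maps into $I$ unchanged, and in particular $\CC(I, I)$ contains a unique element. Iterating the value-lemma argument to non-constant lambda bodies, and using the algebraic compactness of $!$ (Definition~\ref{def:main-model}.6) to identify the interpretations of diverging programs, I would show that $I$ must in fact be initial in $\CC$. Finally, since $\CC$ is SMCC with $I$ initial, the standard computation $\CC(A, B) \cong \CC(I, A \multimap_\CC B) = \{*\}$ shows that every hom-set is a singleton, hence $\CC \simeq \mathbf 1$.

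The main obstacle I foresee is the middle step: passing from the partial collapse $\lrb A \multimap_\CC I \cong I$ to full initiality of $I$. The first gives triviality only of maps into $I$; extending this to all codomains requires combining the substructural property for arbitrary $\lambda$-values with algebraic compactness, and this is precisely where the affine assumption (as opposed to the linear case of $\lnl$) plays the decisive role, since only affinely one can discard a $\lambda$-bound variable irrespective of the codomain type.
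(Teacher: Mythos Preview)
Your proposal has genuine gaps, and the overall route is not the one that actually works.

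The first step is unjustified. You assume that soundness forces the value lemma, i.e.\ that $\lrb{\lambda x^A.*}$ must factor through $L$. But soundness (Theorem~\ref{thm:soundness}) only says $\lrb m = \lrb v$ when $m \Downarrow v$; it says nothing about factoring through $L$. The value lemma is a \emph{tool} used in one proof of soundness, not a consequence of it. Under Assumption~\ref{assume:last} the type interpretation for $A \multimap B$ is redefined to be $\lrb A \multimap_\CC \lrb B$, and there is no ``coherence requirement'' forcing this to agree with the old $L(\lrbv A \multimap \lrb B)$. So you cannot conclude $\lrb A \multimap_\CC I \cong I$. Even if you could, your next claim that $\CC(I,I)$ is a singleton does not follow: from $A \multimap_\CC I \cong I$ you only get $\CC(B \otimes A, I) \cong \CC(B,I)$, which is a bijection, not a bound on cardinality. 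Finally, you yourself flag the passage to initiality of $I$ as an unresolved ``main obstacle'', and indeed your sketch does not close it.

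The paper's proof is quite different and entirely concrete. First, monoidal closure of $\CC$ together with the parameterised algebraic compactness of $!$ implies $\CC$ is pointed (has a zero object), so one has zero morphisms $\perp_{A,B}$, and in particular $f \otimes \perp = \perp$ and $A \otimes 0 \cong 0$. Now take the diverging program $p = \rec\ z^{!I}.\ \force\ z$, with $\lrb p = \perp$. Under the $\CC$-closed interpretation, $\lrb{\lambda x^I. p} = \textbf{curry}(\perp) = (I \multimap \perp)\circ \eta' = \perp$. Then consider $t = (\lambda y^{I \multimap I}. *)(\lambda x^I. p)$: this is well-formed in $\anl$ (precisely because affinely one may discard $y$), and $t \Downarrow *$, so by soundness $\lrb t = \id_I$. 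But by direct computation $\lrb t$ is a composite containing the factor $\textbf{curry}(\diamond) \otimes \perp = \perp$, hence $\lrb t = \perp$. Thus $\id_I = \perp$, so $I$ is a zero object, and then $A \cong A \otimes I \cong A \otimes 0 \cong 0$ for every $A$, whence $\CC \simeq \mathbf 1$. The key idea you are missing is to produce a single explicit affine program whose soundness equation directly forces $\id_I$ to be a zero morphism.
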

\begin{proof}
The monoidal closure of $\CC$ together with Definition \ref{def:main-model}.6 imply that $\CC$ is a pointed category (has a zero object) \cite[Theorem 4.9]{eclnl} and so we shall write $\perp_{A,B} : A \to B$ for its zero morphisms.
The monoidal closure of $\CC$ implies that $A \otimes 0 \cong 0$, for every $A \in \Ob(\CC)$ and that $f \otimes \perp_{C,D} = \perp_{A \otimes B, C \otimes D}$, for any $f: A \to B$.

Let $p = \rec\ z^{!I}.\ \force\ z$. Then, $\cdot \vdash p : I$ and moreover $p \not \Downarrow$ with $\lrb p = \perp$ \cite[Theorem 4.9]{eclnl}.
But then
\[ \lrb {\cdot \vdash \lambda x^I. p : I \multimap I} =  \mathbf{curry}(\perp) = (I \multimap \perp) \circ \eta' = \perp \circ \eta' = \perp. \]
Next, consider the program $t = (\lambda y^{I \multimap I}. *)(\lambda x^I. p)$. This program is well-formed in $\anl$ with $\cdot \vdash t : I$ (but it is not well-formed in $\lnl$) and $t \Downarrow *$.
By soundness, $\lrb t = \lrb * = \id_I$. By definition of $\lrb{-}$, we have
\[
  \lrb t = \eval \circ (\mathbf{curry}(\diamond) \otimes \perp) \circ \cong \circ (\triangle \otimes \id) = \eval \circ \perp \circ \cong \circ (\triangle \otimes \id) = \perp.
\]
This means $\id_I$ is a zero morphism and so $I$ is a zero object. Then, every $A \in \Ob(\CC)$ is a zero object, because $A \cong A \otimes I \cong A \otimes 0 \cong 0$
and therefore $\CC \simeq \mathbf 1.$
\end{proof}

\begin{remark}
If we model recursion by assuming our model is order-enriched instead of compact in Definition \ref{def:main-model}, then one can also show that the model becomes degenerate using similar arguments.
\end{remark}

\begin{remark}
If one assumes the full law of beta-equivalence is satisfied by the language, then the degeneration can be demonstrated for a wider class of models as well.
\end{remark}

\paragraph{Acknowledgements.} I thank Bert Lindenhovius for discussions about
W*-algebras. I also thank the anonymous reviewers for their feedback and I
gratefully acknowledge financial support from the French projects
ANR-17-CE25-0009 SoftQPro and PIA-GDN/Quantex.

\bibliography{refs}
\end{document}